\newcommand{\M}{\mathcal{M}}
\newcolumntype{L}[1]{>{\hsize=#1\hsize\raggedright\arraybackslash}X}%
\newcolumntype{R}[1]{>{\hsize=#1\hsize\raggedleft\arraybackslash}X}%
\newcolumntype{C}[1]{>{\hsize=#1\hsize\centering\arraybackslash}X}%
\newcommand{\argmin}{\operatornamewithlimits{argmin}}
\newcommand{\hbold}{\boldsymbol{h}}
\newcommand{\npower}{\sigma^2}
\newcommand{\wbold}{\boldsymbol{w}}
\newcommand{\wboldopt}{\boldsymbol{w}^{\mathrm{opt}}}
\newcommand{\wboldprimeopt}{\boldsymbol{w}^{\mathrm{'opt}}}
\newcommand{\Wspace}{\mathcal{W}}
\newcommand{\sbold}{\boldsymbol{s}}
\newcommand{\pbold}{\boldsymbol{p}}
\newcommand{\pibold}{\boldsymbol{\pi}}
\newcommand{\abold}{\boldsymbol{a}}
\newcommand{\thetabold}{\boldsymbol{\theta}}
\newcommand{\rbold}{\boldsymbol{r}}
\newtheorem{theorem}{Theorem}
\newtheorem {remark}{Remark}
\begin{document}
	\raggedbottom
	\allowdisplaybreaks
    \title{6G Fresnel Spot Beamfocusing using Large-Scale Metasurfaces: A Distributed DRL-Based Approach
    \thanks{This research was supported by Business Finland via project 6GBridge - Local 6G (Grant Number: 8002/31/2022), and Research Council of Finland, 6G Flagship Programme (Grant Number: 346208).}}
	\author{Mehdi Monemi, \textit{Member}, IEEE, Mohammad Amir Fallah, Mehdi Rasti, \textit{Senior Member}, IEEE, Matti Latva-Aho, \textit{Fellow}, IEEE 
		\thanks{
  Mehdi Monemi is with the Centre
for Wireless Communications (CWC), University of Oulu, 90570 Oulu, Finland (email: mehdi.monemi@oulu.fi).
\\
Mohammad Amir Fallah is with the Department of Engineering, Payame Noor University (PNU), Tehran, Iran (email: mfallah@pnu.ac.ir)
\\
Mehdi Rasti is double affiliated with both Centre for Wireless Communications (CWC) and the Water, Energy and Environmental Engineering research unit (WE3) at the University of Oulu, 90570 Oulu, Finland (e-mail: mehdi.rasti@oulu.fi). The work of M. Rasti was supported by the University of Oulu and the the Research Council of Finland (former Academy of Finland) Profi6 336449.
\\
Matti Latva-Aho is with Centre
for Wireless Communications (CWC), University of Oulu, 90570 Oulu, Finland
(email: matti.latva-aho@oulu.fi).
}
	}

	\maketitle
	\begin{abstract}
\textcolor{black}{
We propose a novel approach to smart spot-beamforming (SBF) in the Fresnel zone leveraging extremely large-scale programmable metasurfaces (ELPMs). A smart SBF scheme aims to adaptively concentrate the aperture's radiating power exactly at a desired focal point (DFP) in the 3D space utilizing some Machine Learning (ML) method. This offers numerous advantages for next-generation networks including ultra-high-speed wireless communication, location-based multiple access (LDMA), efficient wireless power transfer (WPT), interference mitigation, and improved information security.} 
SBF necessitates ELPMs with precise channel state information (CSI) for all ELPM elements. However, obtaining exact CSI for ELPMs is not feasible in all environments; we alleviate this by developing a novel CSI-independent ML scheme based on the TD3 deep-reinforcement-learning (DRL) method. While the proposed ML-based scheme is well-suited for relatively small-size arrays, the computational complexity is unaffordable for ELPMs.
To overcome this limitation, we introduce a modular highly scalable structure composed of multiple sub-arrays, each equipped with a TD3-DRL optimizer. This setup enables collaborative optimization of the radiated power at the DFP, significantly reducing computational complexity while enhancing learning speed. The proposed structure's benefits in terms of 3D spot-like power distribution, convergence rate, and scalability are validated through simulation results.

	\end{abstract}
	\begin{keywords}
	Spot beamfocusing, Fresnel zone, Near-Field, Deep reinforcement learning, 6G networks 
	\end{keywords}
	
	
\thispagestyle{empty}

\section{Introduction}

\subsection{Introduction}

Electromagnetic propagation can be investigated in three regions based on the distance from the transmitting antenna $r$, namely \textit{far-field} also known as Fraunhofer ($r>\overline{D}$), \textit{non-radiative near-field} ($r<\underline{D}$), and
\textit{radiative near-field}  ($\underline{D}<r<\overline{D}$) which is also named as \textbf{Fresnel zone}, where the values of $\underline{D}$ and $\overline{D}$ depend on several factors such as the antenna geometry, and signal wavelength \cite{selvan2017fraunhofer}. The concept of beamforming is somehow different in each of these regions.
In the far-field region, wherein the electromagnetic wave propagates in the form of plane waves, the beamforming is 2-dimensional (2D) in the direction of elevation, and azimuth angles, with no directivity dependency in the direction of $r$. In the non-radiative near-field region, where the receiving antenna is very close to the transmitting antenna, the electromagnetic propagation is dominated by the stationary inductive/capacitive field, and thus, the beamforming problem is not an issue of concern here.
In the Fresnel zone, the electromagnetic wave radiation is in the form of spherical waves enabling 3-dimensional (3D) beamforming, wherein the directivity can also be dependent on $r$. \textit{Beamfocusing} is a kind of 3D beamforming wherein the electromagnetic power is concentrated in a small region in the 3D space. In practice, however, it is of crucial importance for many applications to highly concentrate the power in a very small region (i.e., around a DFP). We call this concept \textit{spot beamfocusing} (\textbf{SBF}) in this work. 


The sixth generation (6G) of wireless networks provides connectivity to a large number of low-cost small form-factor sensor-type Internet of Everything (IoE) devices with diverse needs enabling massive machine-type communication (mMTC)\cite{rasti2022evolution,mahmood2021machine}. SBF can play a crucial role in realizing many of the benefits of 6G IoE-mMTC networks. The massive number of connected devices in a 6G network might require a large amount of power and put a huge burden on the energy consumption of the network \cite{huang2019survey}. In line with the 6G zero-energy policy, most of the IoE devices are battery-less or have very small batteries equipped with energy harvesting technologies. A key enabling technology (KET) for optimal RF energy harvesting can be the implementation of wireless power transfer (WPT) through 3D SBF at the very exact location point of the user equipment's (UE's) receive antenna. High-capacity WPT is one of the major applications of SBF, however, there exist several other important issues wherein SBF is certainly of crucial benefit. {\color{black}For example, SBF can leverage the capacity of 6G mmWave/sub-THz network to a great extent through effective spatial frequency reuse and interference mitigation for the mMTC applications \cite{zhang20236g}  as well as realizing the location-based multiple access (LDMA) \cite{wu2023multiple}}. On the other hand, the hyper-connectivity between human and everything causes an increase in RF pollution, which is dangerous for health\cite{zhang20236g,huang2020holographic}; SBF tackles health issues by minimizing RF pollution through beam concentration. Information security is another important field wherein SBF can play a major role; more specifically, dense IoT devices with constrained resources might be highly vulnerable to malicious attacks. Here, SBF prevents the information from being captured by any device (including the attackers) positioned at any location other than that of the intended UE.

A sharp Fresnel beamfocusing inherently requires a relatively high aperture-to-wavelength ratio \cite{huggins2007introduction}, which is practically not achievable in the custom sub-6 GHz RF frequencies for indoor environments due to requiring very large-size apertures; however, with the transition of 6G systems to mmWave/THz frequencies together with the use of large-scale phased-array antennas with many antenna elements, SBF systems can practically be realized, enabling the transfer of data and energy from the aperture to the desired focal-point, whose distance can range from a few tens of centimeters to a few tens of meters. Large-scale programmable metasurfaces (ELPMs) which comprise several thousands of programmable metasurfaces (PMs), can be employed as cost-efficient large-scale apertures with the ability of smart beam control is an urgent need to implement SBF.
PM is currently one of the advanced costly-efficient architectures of the metasurfaces family, which has the excellent capability of real-time beam control. PM is composed of programmable meta-atoms, each of which has a tunable phase-shift functionality by using PIN diodes, enabling low-cost quantized phase shifts. 

For the case when the channel state information (CSI) of all antenna elements is exactly available, obtaining the optimal phase values resulting in a sharp beamfocusing is straightforward to be handled; however, the assumption of perfect near-field CSI for ELPMs is not always aligned with practical limitations in 6G networks, especially for indoor environments as justified in the following. In the classical far-field analysis of free space channel models, the propagation distance from each array antenna element to the UE is almost the same leading to the same path loss and angle of arrival for all PM antenna elements. Thus, the phase difference between the elements is solely attributable to the different spatial positioning of the elements.
However, in the near-field region, the wireless link from each array element to the UE has different path loss, and the phase variations of different antenna elements result from different positions of the array elements as well as different angles of arrival. This makes the perfect CSI estimation for all elements of the ELPMs much more complicated than that in the far-field region even for the free space model, let alone considering the multi-path channel model for indoor environments resulting from the reflections of the walls and many other reflecting objects. 
On the other hand, a sharp focal point requires that received signals from all PM elements be of the same phase; a small error in the CSI estimation for the elements results in the loss of spot beamfocusing at the desired location. Therefore, the implementation of near-field CSI-independent beamfocusing algorithms is highly preferred to those requiring exact/estimate values of the CSI. 

Based on what stated so far, and considering the very high cardinality of the set of array elements of ELPM, we are faced with a complicated problem:
\textit{{\color{black}
Presenting a scalable hardware structure and low-complexity and smart software scheme for ELPMs to realize a sharp spot-like Fresnel beamfocusing with no need for the CSI between the UE and any of the PM elements.}} As will be shown in the article, the stated problem is an NP-hard one with a very large search space which may not be directly handled with any of the traditional or machine learning (ML) solution schemes.  To tackle this issue, we employ a novel structure consisting of modular PM sub-arrays each having an independent deep reinforcement learning (DRL) optimizer; a deduction scheme finally computes the final optimal solution from the outputs of all sub-array DRL optimizers.

\subsection{Background work}
Extensive research has been conducted on far-field beamforming including transmit antenna selection, as well as analog/digital beamforming for the large-scale arrays \cite{8855807,9140420,ning2023beamforming,ning2021prospective}, however, beamforming for the near-field Fresnel zone has been less dealt with compared to that for the far-field.
{\color{black}Generally, the works on near-field beamforming are divided into two categories: those dealing with beamforming or synthesizing the near-field beam pattern for applications such as communication, 
 localization, and sensing \cite{zhang20236g,jiang2022reconfigurable,cheng2022efficient,alexandropoulos2022near,bjornson2021primer,10146329,zhang2022beam,lee2019beamforming,cui2021near}, and those proposing structures and algorithms for focusing the beam in a confined volume \cite{nepa2017near, filippou2021experimental, tofigh2014near, peng2022noncontact, boneberg2022optical, khan2020wireless, zhang2022near, demarchou2022energy,costanzo2021evolution,clerckx2018fundamentals}. 
The authors of \cite{zhang20236g} have explored the challenges and key solution schemes for near-field MIMO and massive MIMO communications in the next-generation 6G networks.
In \cite{jiang2022reconfigurable}, a RIS structure is designed that converts the received planar waves into cylindrical or spherical waves resulting in less energy leakage and higher channel capacity, and then proposes the maximum likelihood (ML) method and the focal scanning (FS) method to sense the location of the receiver. In \cite{cheng2022efficient}, an iterative low-complexity beam management algorithm is presented for solving the localization problem for multiple coherent sources in the near-field region.}
The authors of \cite{alexandropoulos2022near } have proposed a novel beam management technique that leverages the near-field interactions between RIS elements and antenna arrays, allowing for hierarchical control of the beam patterns and polarization.
As opposed to the conventional far-field beamforming, near-field beamforming has a finite depth; this is analytically investigated in \cite{ bjornson2021primer}, wherein the authors have specified a distance range for RIS-enabled near-field beamforming where finite-depth beamfocusing is possible, and the distance range where the beamforming gain tapers off.
In \cite{10146329} the authors have derived the optimal analog 3D near-field beamforming which can effectively recover the array-gain losses in the near-field region.
{\color{black}The authors of \cite{zhang2022beam}
studied the potential of near-field beamfocusing in facilitating high-rate multi-user downlink
MIMO systems through fully digital and hybrid beamforming schemes.
In addition to the conventional near-field beamforming, the concentration of signal power around a focal point through phased array beamfocusing mechanisms have been investigated in \cite{nepa2017near}}, and then the application of near-field beamfocusing has been explored in different domains including focused medical treatment \cite{filippou2021experimental,tofigh2014near}, contactless microwave inspection \cite{peng2022noncontact}, optical imaging \cite{boneberg2022optical}, wireless power transfer (WPT) or wireless power and information transfer (WPIT) \cite{khan2020wireless, zhang2022near, demarchou2022energy, costanzo2021evolution,clerckx2018fundamentals}.

Conventional beamforming for large-scale arrays can be challenging due to the large number of antenna elements involved, making it difficult to compute the optimal phase and/or amplitude relating to each antenna element. Recently, the implementation of deep learning has drawn great attraction for the beamforming of large-scale antenna arrays \cite{alkhateeb2018deep, lin2019beamforming, eappen2022deep, zhang2021reinforcement, liu2022low, wang2209extremely}. 
In \cite{alkhateeb2018deep}, the authors proposed a deep neural network (DNN)-based beamforming method for large-scale arrays. The proposed method was trained using a dataset of previous measurements to learn the optimal beamforming weights for a given set of input signals, achieving high accuracy performance with low computational complexity. A beamforming DNN-based structure is developed in
 \cite{lin2019beamforming} to maximize the spectral efficiency; the performance of the proposed scheme is then evaluated under imperfect CSI. In \cite{eappen2022deep}, a deep learning integrated reinforcement learning (DLIRL) algorithm is proposed for comprehending intelligent beamsteering for Beyond Fifth Generation (B5G) networks. The proposed scheme includes alternate path finding during path obstruction and steering the beam appropriately between the smart base station and UE.
  In  \cite{zhang2021reinforcement} the problem of beam codebook optimization for large-scale arrays is formulated as a Markov decision process, where the agent (i.e., the transmitter) learns to select the best beam codebook through interaction with the environment.

Achieving spot-like beamfocusing using low-cost quantized-phase ELPMs requires arrays with a very large number of elements and complicated algorithms. Most existing works study this problem for the case where CSI of all array elements is available through channel estimation techniques; Such information is employed in the learning process of the DNNs \cite{10044679, wei2022codebook, shen2023multi, liu2022deep}.
In \cite{10044679}, an efficient model-based deep learning algorithm is proposed for estimating the near-field wireless channels, and then the channel estimation problem is solved by applying the Learning Iterative Shrinkage and Thresholding Algorithm (LISTA).
The authors of \cite{wei2022codebook} discuss the unique challenges associated with RIS systems for near-field zones, including the required massive array size and the need for accurate phase control; then, they 
develop a hybrid beam training method that combines analog and digital techniques.
In \cite{shen2023multi}, the authors investigate the coverage and capacity challenges associated with traditional single-beam near-field RIS systems, and then they propose a novel multi-beam scheme to improve the overall system performance.

As mentioned earlier, a small channel estimation error for near-field ELPMs might cause a rather considerable phase error leading to a deviation in the beamfocusing process.
Therefore, CSI-independent methods result in much more accurate beamfocusing, especially in the case of SBF applications.
\textcolor{black}{Channel estimation for ELPMs and massive MIMO antennas is a demanding and challenging task for existing and advancing communication \cite{cui2022near,10044679} and localization  \cite{pan2023ris} technologies. The efficiency of such schemes relies heavily on the sparsity of the channel matrix in the angular domain. This assumption, however, only holds when the wavefronts are planar in the far-field. In this regard, a polar-domain sparse representation of the channels with a compression ratio of around 50\% for a 256-element antenna was proposed in the near-field by the authors of \cite{cui2022channel}. However, this method is only applicable to 1D linear arrays and not to the 2D extremely large-scale antenna arrays used for SBF. In the near-field, the channel coefficients matrix is not sparse due to the spherical wavefront, even in the absence of multi-path propagation. Given this, along with the extremely large number of antenna elements required for SBF, the conventional channel estimation methods for massive MIMO do not apply to ELPMs in the near-field due to the high pilot overhead and processing load. ML-based schemes could be employed to address this challenge.}
To the best of our knowledge, \cite{zhang2023deep} is the only work in the literature proposing an ML solution to the problem of near-field beamfocusing with unknown CSI; however, this work has scalability issues and only works for near-field beamfocusing using a 1D linear array antenna or antennas of few elements, and thus, the proposed structure is not suitable for the SBF; more specifically, in the aforementioned work, simulations are only achieved for a 1D linear array which leads to a fan-beamfocusing and can not concentrate the beam in a focal point; the SBF requires large-scale \textbf{2D} metasurfaces, leading to a substantial increase in time-complexity of the ML-based algorithm proposed in \cite{zhang2023deep} which is practically unaffordable.

\subsection{Motivation and Contributions}

 The major contributions of this work are as follows:
\begin{itemize}

     \item \textcolor{black} {We presented a novel ML-based hardware and software structure for the realization of smart near-field SBF by using ELPMs without requiring the CSI of antenna elements. An efficient Fresnel-zone SBF requires a planar phased array with an extremely large number of antenna elements. 
     Theoretically, assigning the optimal beamforming vector for the 3D spot beamfocusing requires an exact estimation of CSI of all antenna elements, however, finding the exact CSI for the case of large-scale array elements is much more challenging for the near-field region with 3D spherical wavefront compared to that in the far-field region with the 2D planar wave-front. This is mainly due to the non-sparsity of the channel gain matrix in the near-field region, which is not the case in sparse channel matrix representation for far-field massive MIMO systems. Besides, the hybrid beamforming techniques employed for massive MIMO systems do not apply to the SBF due to the low resolution of beamforming in such systems. On the other hand, for the SBF, a small deviation in the estimation of the channel states might result in the UE being in the blind zone; this, however, is not generally the case in far-field beamforming. Considering the stated issues, we have devised a CSI-independent (\textbf{CSII}) ML-SBF mechanism that fine-tunes the radiated beam to the DFP by tracking the measured power value variations without requiring any knowledge of the CSI. To the best of our knowledge, this is the first work in the literature presenting a system for realizing near-field CSI-independent spot beamfocusing.
     }
     
    \item \textcolor{black}{To build the whole SBF hardware and software structure for the ELPMs, first we present the core unit consisting of a PM planar subarray and a DRL-based algorithm for the implementation of Fresnel-zone 3D beamfocusing.} The corresponding optimizer agent is proposed to work based on a \textit{revised} version of the twin-delayed deep deterministic policy gradient (\textbf{TD3}) DRL mechanism. To find the optimal \textit{quantized} beamforming vector, we have added a quantizer as well as a $k$-nearest-neighbor ($k$nn) searching mechanism to the standard TD3-DRL structure.  The proposed  $k$nn algorithm obtains the optimal candidates in a fast manner for $r$-bit uniform quantizers. This is different from traditional $k$nn algorithms which generally obtain non-exact approximations due to searching in only a random subset of the action space.  The numerical results demonstrate the superior performance of the proposed scheme in comparison with the preceding conventional deep deterministic policy gradient (DDPG) DRL version.

    \item The proposed DRL-based algorithm can easily be implemented for small-scale PM arrays; SBF however requires ELPMs with many antenna elements, leading to unaffordable computational complexity and learning speed due to the large cardinality of the beamforming search space as well as a large number of neurons in the deep neural networks.
    {\color{black}
    In this regard, the implementation of a centralized CSI-independent structure for obtaining optimal beamfocusing vector is computationally unaffordable for ELPMs. 
    To address this challenge, we propose a novel \textbf{distributed DRL}} structure comprising several sub-array modules each equipped with a single TD3-DRL optimizer, all working collaboratively in order to maximally focus the radiating energy at the DFP location. We will show through simulation results that the proposed modular structure reduces computational complexity and increases the learning speed to a great extent, making it quite practical and efficient.
    
\end{itemize}

\section{System Model and Problem Formulation}
\label{sec:system_model_and_problem_formulation}

\subsection{System Model}
\begin{figure}
		\centering
		\includegraphics [width=250pt]{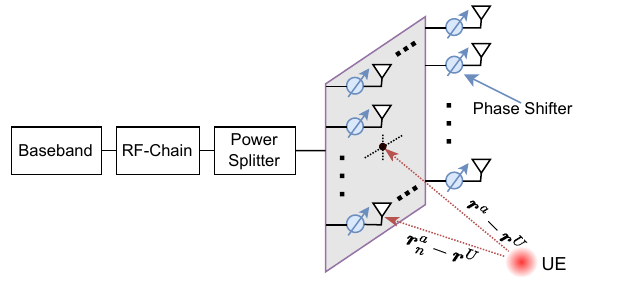} \\
		\caption{System model of the ELPM SBF system
		} 
		\label{fig:system_model}
\end{figure}
Consider a single RF-chain PM as shown in Fig. \ref{fig:system_model} consisting of a set of antenna elements $\mathcal{N}=\{1,2,..., N\}$ in a planar constellation, where $N=N_r\times N_c$ is the number of antenna elements in which $N_r$ and $N_c$ are respectively the numbers of rows and columns of the elements. {\color{black} The PMs can be of either transmissive or radiative type, and the phase of the signal emitted from/passed through each PM is considered to be 
controllable.}   The aperture diameter is denoted by $D$, and the phase of the transmitted signal through each antenna element is controlled through a quantized $r$-bit phase shifter. Let $\boldsymbol{r}_n^a$, $\boldsymbol{r}^a$, and $\boldsymbol{r}^{U}$ be respectively the location points of the antenna element $n\in\mathcal{N}$, the center of the aperture, and the UE. We consider the multipath channel gain model between antenna element $n$  and UE denoted by $h_n$ as follows:
\begin{multline}
    \label{eq:hn}
    h_n=\eta\|\rbold_n^a-\rbold^U \|^{-\frac{\alpha}{2}}e^{-j\left( \frac{2\pi}{\lambda} \|\rbold_n^a-\rbold^U \| + \Delta\theta_{n0}\right) }
    +
    \\
    \eta\sum_{l=1}^{L} 
    \left[
    \beta_{nl} \left(d_{nl}\right)^{-\frac{\alpha}{2}} e^{-j\left(\frac{2\pi}{\lambda}d_{nl}+\Delta \theta_{nl}\right)}
    \right].
\end{multline}
The first term corresponds to the channel gain relating to the direct path from element $n$ to the UE, and the next term relates to the signals received from $L$ different reflected paths. $\eta=\left(\frac{\lambda}{4\pi}\right)^{-\frac{\alpha}{2}}$ is the attenuation coefficient in which $\lambda$ is the transmitted signal wavelength, and $\alpha$ is the path-loss exponent. $d_{nl}$ is the total path length of the propagated signal from the $l$'th path of antenna element $n$ toward the UE, {\color{black}$\beta_{nl}$ is the corresponding channel gain} relating to the signal attenuation due to the reflection (which is generally much smaller than unity), and $\Delta \theta_{nl}$ models the corresponding phase shift initiated from the reflecting surfaces in the $l$'th path, as well as the phase mismatch due to hardware impairments.
The effective channel gain vector between the PM and UE is formulated as 
\begin{align}
    \hbold=[h_n g_n^a g^{U}]^{N\times 1}
\end{align}
where $g_n^a$ is the directivity gain of the transmit antenna element $n$ (in which the mutual coupling between the antenna elements is considered as well), and $g^{U}$ is the directivity of the UE's antenna.
The UE's received signal denoted by $x$ is obtained as follows:
\begin{align}
    x =\wbold^H  \hbold s +\nu
\end{align}
 where $s$ is the input signal to the PM radiating elements (before entering phase shifters), {\color{black}$\wbold=\frac{1}{\sqrt{N}}[e^{j\phi_1}, e^{j\phi_2}, ...,e^{j\phi_N} ]^T$ is the phase-shift coefficient vector, and $\phi_i\in\phi^{\mathrm{quan}}$ (correspondingly $w_i\in\Wspace_o$, and $\wbold\in\mathcal{W}_o^N\equiv \mathcal{W}$), in which $\phi^{\mathrm{quan}}$ is the set of $2^r$ valid phases obtained from the $r$-bit quantized phase shifters}\footnote{ {\color{black} For example, for 2-bit uniform quantizer, we have $\phi^{\mathrm{quan}}=\{ 0,\pi\}$, $\Wspace_o=\{e^{j0},e^{j\pi}\}$, and $\mathcal{W}=\{e^{j0},e^{j\pi}\}^N$.}}, and $\nu$ is the additive noise.
 For any given vector $\wbold$, the received power at some location $\rbold^U$  is  obtained as follows:
 \begin{multline}
     \label{eq:power_def}
     p(\wbold,\rbold^U)=\mathbb{E} [ x x^*]=\wbold^H  \hbold \hbold^H \wbold \mathbb{E}[s s^*] +
     \wbold^H  \hbold\mathbb{E}[s \nu^*] +
     \\
     \hbold^H  \wbold \mathbb{E}[s^* \nu] + \npower_{\nu}.
 \end{multline}
 in which $\npower_{\nu}$ is the noise power.

 \subsection{Fresnel Zone Considerations}
 {\color{black}
Beamfocusing requires that the following relation holds.
\begin{align}
   \lVert \boldsymbol{r}^a-\boldsymbol{r}^{U}\rVert
    \in
    [\underline{D},D^F]
\end{align}
The maximum limit $D^F$ is the Fraunhofer limit which is the boundary between the near-field and far-field, and whose value is obtained as $D^F= 2D^2 /\lambda$ on the boresight of the antenna. For the off-boresight scenario where the angle between the antenna plane and the line connecting the DFP and antenna center is $\theta$, this can be obtained as $D^F= 2D^2\sin^2 \theta/\lambda$  \cite{balanis2016antenna}. The lower bound $\underline{D}$ is the boundary between non-radiative and radiative near-field regions wherein the amplitude of the radiative and reactive powers are the same, and beyond which the reactive power fades rapidly. The value of $\underline{D}$ depends on the geometry of the antenna and is generally lower than a wavelength. If the lower bound is violated, the UE is so close to the aperture and the major received power is mostly captured through the inductive/capacitive field in the non-radiating near-field region.
  On the other hand, the violation of the upper bound inequality results in the UE lying in the far-field region wherein the focal concentration of the beam is not possible. 
  }

Fig. \ref{fig:Freznel_Plots} illustrates how the location of the DFP and the size of the aperture affect the achieved focal concentration of the beam.{\color{black} While it is seen that a desired 3D focal point is not possible to be achieved when the DFP is very close to and far enough from the aperture corresponding to the non-radiating-near-field (Fig. \ref{fig:Freznel_Plots}-a) and far-field zone (Fig. \ref{fig:Freznel_Plots}-b) respectively, moderate and sharp spot focal points can be obtained in the Fresnel zone as seen in Figs. \ref{fig:Freznel_Plots}-c and \ref{fig:Freznel_Plots}-d respectively. The SBF corresponding to Fig. \ref{fig:Freznel_Plots}-d is obtained when $D/\lambda$ is very high, which requires ELPMs rather than small-scale PMs.
}
{\color{black}
\begin{remark}
    It should be noted that a sharper focal point necessitates a more dominant near-field effect and requires a closer distance between the DFP and the aperture \cite{smith2017analysis}. This means that SBF is not feasible at distances close to $D^F$,  
   and thus, the feasible region for SBF is considered as
   \begin{align}
   \lVert \boldsymbol{r}^a-\boldsymbol{r}^{U}\rVert
    \in
    [\underline{D},\overline{D}]
\end{align}
where $\overline{D}$ is some constant that holds in $\overline{D}<D^F$ and its value depends on the antenna design and the desired BFR.
\end{remark}
\begin{remark}
To achieve near-field SBF with a lower BFR, or at higher distances of the focal point from the antenna, a higher value of the ratio $D/\lambda$ is required \cite{goodman2017introduction}, which can be accomplished through decreasing $\lambda$ as well as increasing $D$. The former is achieved through transitioning to higher frequencies (mm-wave, sub-THz, and THz). The latter can be realized by increasing the number of array elements, leading to larger-scale and more costly antennas.
\end{remark}
}



\begin{figure}[t]
    \begin{tabular}{c c} 
    \hspace{-10pt} 
    \includegraphics[width=0.5\linewidth]{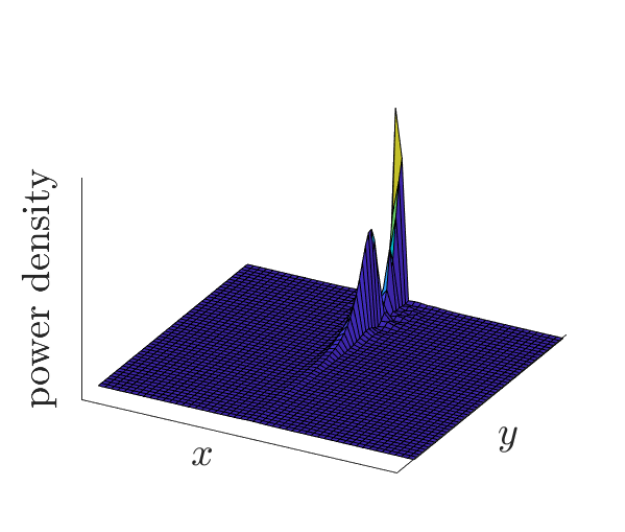}
  &
   \hspace{-10pt} \includegraphics[width=0.5\linewidth]{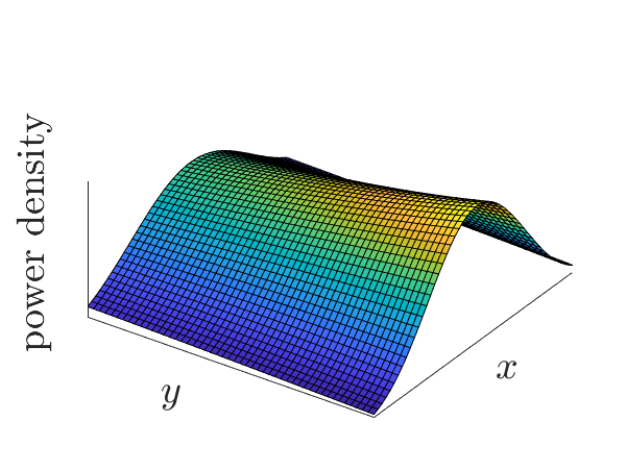}
  \\
    (a) & (b)
\\
    \hspace{-10pt} 
    \includegraphics[width=0.5\linewidth]{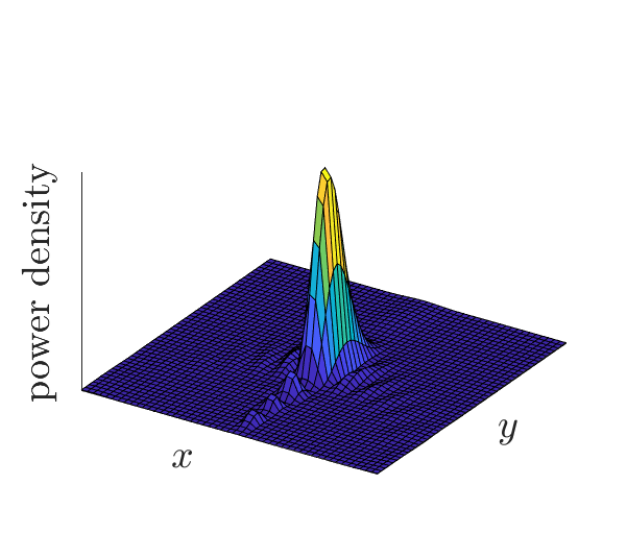}
    &
        \hspace{-10pt} 
    \includegraphics[width=0.5\linewidth]{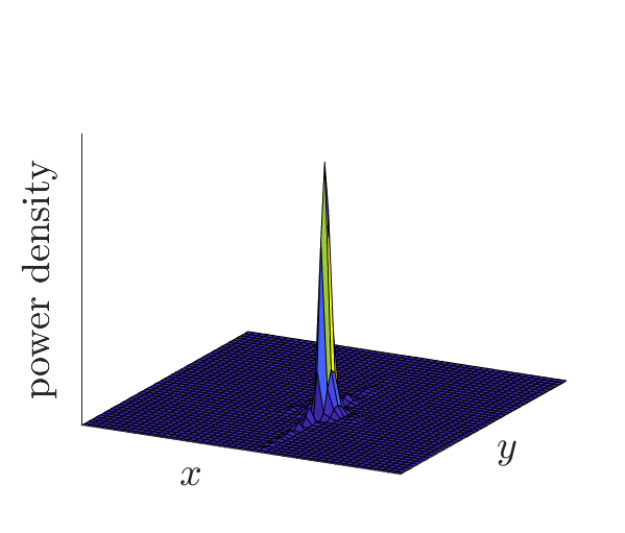}
  \\
    (c) & (d)
\end{tabular}
  \caption{Different scenarios for beamfocusing wherein an aperture is located on the $xz$ plane and the focal point is located on the $xy$ plane in front of the aperture: {\color{black}(a) DFP located in the non-radiating near-field region. 
  (b) DFP located in the far-field region. 
  (c) DFP located in the Fresnel region using a small-scale PM. 
  (d) SBF at the DFP realized in the Fresnel region using ELPM.
  }}
  \label{fig:Freznel_Plots}
\end{figure}

 \subsection{Problem Formulation}
\begin{figure}
		\centering
		\includegraphics [width=120pt]{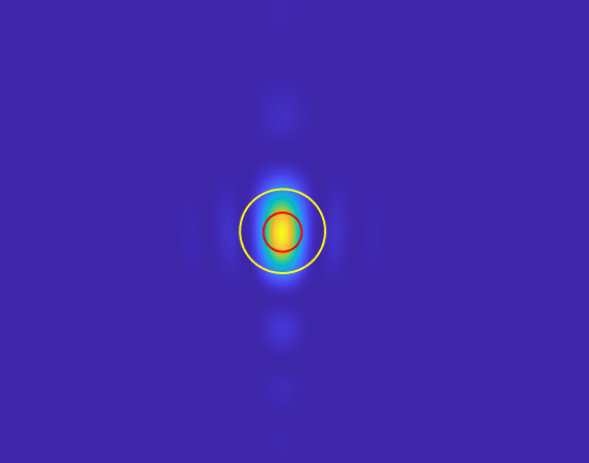} \\
		\caption{{\color{black}Two BFRs corresponding to two different values of $\eta$. The yellow and red circles respectively have BFRs $R_1$ and $R_2$, containing  $\eta_1=90$\% and $\eta_2=$50\% of the total power in the reference plane.}
		} 
		\label{fig:BFR}
\end{figure}
{\color{black}
    In what follows, first, we formally define the concept of beam focus radius (BFR) and then express the problem formulation.
    
    Let define $R(\wbold,\rbold^U,\eta)$ corresponding to the beamforming vector $\wbold$,  DFP location $\rbold^U$, and some given constant $0<\eta< 1$, as the radius of the circle $S_R$ located on the reference plane $S$ and centered at DFP, through which a fraction $\eta$ of the total radiating power in the reference plane $S$ is passed.
    For example, a BFR value corresponding to $\eta=0.9$ implies that a circle of radius BFR centered at DFP contains 90\% of the total radiating power at the reference plane.  i.e., $R(\wbold,\rbold^U,\eta)\equiv R$ is formally obtained by finding a value of $R$ (corresponding to $S_R$), for which the following  equality holds: 
    \begin{align}
    \label{eq:R}
        \int_{S_R} \partial \pbold(\wbold,\rbold').\widehat{\boldsymbol{a}}_nd s'
            = \eta 
            \int_{S^c} \partial \pbold(\wbold,\rbold').\widehat{\boldsymbol{a}}_nd s'=\eta P^T
    \end{align}
    where $\partial\pbold(\wbold,\rbold')$ is the power density function corresponding to $\wbold$ at location point $\rbold'$, and $\widehat{\boldsymbol{a}}_n$ is the unit vector normal to the reference surface (focal plane), and $P^T$ is the total power passing from the focal plane. Considering a sphere of radius $R$ at the DFP, as will be shown later, when beamfocusing is achieved through ELPMs in the near-field dominant region, the power level fades outside the sphere in all directions, even when getting close to the aperture. This implies that the focal region in the near-field can be imagined as a sphere centered at the DFP. Fig. \ref{fig:BFR} depicts two circles around the focal point corresponding to two values of BFRs. 
  }
    
 Given a large-scale PM with a DFP at $\rbold^U$, the optimal beamfocusing problem 
 is formally defined as finding the beamforming vector corresponding to the minimum BFR under the physical 
 Fresnel constraint (phy. const.)  as follows:
\begin{subequations}
\label{eq:opt1}
    \begin{align}
        \mathrm{\textbf{P1:}}
        \hspace{30pt}
	\min_{\wbold} & \ R(\wbold,\rbold^U,\eta) 
        \\
        \mathrm{subject\ to:} & \ \wbold \in \mathcal{W} 
        \\
        \mathrm{phy.\ const.:} & \ \lVert \boldsymbol{r}^a-\boldsymbol{r}^{U}\rVert\in [\underline{D},\overline{D}]
    \end{align}
\end{subequations}
 On the other hand, we may define the optimal beamfocusing vector as the one corresponding to the maximum WPT at the desired point $\rbold^U$, which is stated as follows:
     \begin{subequations}
     \label{eq:opt2}
	\begin{align}
        \mathrm{\textbf{P2:}}
        \hspace{30pt}
	    \max_{\wbold} & \ p(\wbold,\rbold^U) 
        \\
        \label{eq:w2}
        \mathrm{subject\ to:} & \ \wbold \in \mathcal{W} 
        \\
        \mathrm{phy.\ const.:} & \ \lVert \boldsymbol{r}^a-\boldsymbol{r}^{U}\rVert\in [\underline{D},\overline{D}]
	\end{align}
 \end{subequations}
 If we consider exact CSI for all antenna elements is available, and a continuous phase beamforming vector is allowed (i.e., constraint \eqref{eq:w2} is relaxed), problem {\bf P2} is convex as will be later shown in this section. However, even with these simplifying assumptions, problem {\bf P1} is not a well-behaved problem because there exists no closed-form solution to obtain $R(\wbold,\rbold^U,\eta)$ from the integral equation \eqref{eq:R}. The problems \textbf{P1} and \textbf{P2} for ELPMs in the Fresnel region are equivalent in many scenarios. In what follows we show that for the focal reference plane, one can solve the more straightforward problem \textbf{P2} instead of the original SBF problem \textbf{P1} for extremely large-scale PMs.
 {\color{black}
     \begin{theorem}
     \label{clm:1}
        Let $\wbold_1^*$ and $\wbold_2^*$ be the solutions to {\bf P1} and {\bf P2} respectively. The optimal SBF problem \textbf{P1} and WPT problem \textbf{P2} for ELPMs are equivalent for the focal reference plane, in the sense that $R(\wbold_2^*,\rbold^U,\eta)= R(\wbold_1^*,\rbold^U,\eta)$ and $p(\wbold_2^*,\rbold^U)= p(\wbold_1^*,\rbold^U)$, provided that the near-field effect is dominant and the array neighboring elements spacing is not higher than half a wavelength.
       
    \end{theorem}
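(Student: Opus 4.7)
The plan is to reduce the theorem to showing that the same excitation simultaneously solves \textbf{P1} and \textbf{P2}, exploiting the two structural hypotheses. From the definitions, optimality gives the trivial inequalities
\begin{align*}
p(\wbold_2^*,\rbold^U) \geq p(\wbold_1^*,\rbold^U), \qquad
R(\wbold_1^*,\rbold^U) \leq R(\wbold_2^*,\rbold^U),
\end{align*}
so the task reduces to proving the matching reverse inequalities. I would do this by exhibiting an excitation, specifically the phase-conjugated (matched-filter) excitation $w_n \propto h_n^*/|h_n|$, that is optimal for both problems at once; the equalities in the theorem then follow because $\wbold_1^*$ and $\wbold_2^*$ must attain the same optimal values as this common excitation.

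The first ingredient is an energy-conservation observation. Because $|w_n|^2 = 1/N$ is independent of $\wbold$ and the back-lobe radiation is negligible by hypothesis, the total power crossing the reference plane $S$, namely $P_{\mathrm{tot}} \equiv \int_{S^c} \partial \pbold(\wbold,\rbold')\cdot\widehat{\abold}_n\,ds'$, is essentially a constant in $\wbold$. Substituting this into the defining equation~\eqref{eq:R} shows that $R(\wbold,\rbold^U,\eta)$ is determined solely by how much of the fixed $P_{\mathrm{tot}}$ is packed into a disc of radius $R$ centered at the DFP, so BFR minimization becomes equivalent to main-lobe concentration maximization.

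The second ingredient is a matched-filter / diffraction-limit argument. Relaxing the quantized constraint, the maximizer of $|\wbold^H \hbold|^2$ over $|w_n|=1/\sqrt{N}$ is $w_n \propto h_n^*/|h_n|$, which coherently aligns every element's contribution at $\rbold^U$. For a dense (non-sparse) planar ELPM, this phase-conjugated excitation produces the diffraction-limited focal spot on $S$, i.e.\ the narrowest spot the aperture can form; any other excitation spreads power into sidelobes that, given non-sparsity, cannot re-focus into a competing concentrated spot of comparable intensity. Hence the matched filter simultaneously realizes the minimum of $R$ and the maximum of $p$, and the claimed equalities follow. A final approximation step carries the equivalence from the continuous-phase relaxation back to the quantizer set $\Wspace$, using the standard fact that the array-gain loss from $r$-bit phase quantization scales as $O(1/N)$ and is therefore negligible for ELPMs.

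The main obstacle is making the isoperimetric-flavored link between "largest peak at $\rbold^U$" and "smallest radius containing an $\eta$-fraction of the total power" precise: this equivalence is false for generic apertures and rests squarely on the two hypotheses. Non-sparsity is indispensable, because a sparse aperture admits grating lobes that could concentrate power into a second spot away from $\rbold^U$, inflating $R$ without lowering $p(\wbold,\rbold^U)$; negligible back-lobes is what makes $P_{\mathrm{tot}}$ an invariant of $\wbold$, so that energy displaced from the main lobe must re-appear on $S$ as sidelobes and enlarge $R$, rather than leak behind the aperture and spuriously shrink the denominator of \eqref{eq:R}.
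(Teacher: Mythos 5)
Your proposal is correct in spirit but takes a genuinely different route from the paper. The paper never exhibits an explicit optimizer: it works directly with the (quantized) optima $\wbold_1^*$ and $\wbold_2^*$, argues via the Fresnel approximation that for an ELPM the sidelobe power is negligible so $\eta\approx 1$ and both $\int_{S_{R_1}}\partial\pbold(\wbold_1^*,\cdot)$ and $\int_{S_{R_2}}\partial\pbold(\wbold_2^*,\cdot)$ equal the same fixed total $P^T$, and then derives a contradiction: combining the two trivial optimality inequalities $R(\wbold_2^*)\geq R(\wbold_1^*)$ and $\partial\pbold(\wbold_2^*,\rbold^U)\geq \partial\pbold(\wbold_1^*,\rbold^U)$ with the assumed Gaussian-like (sinc-squared) focal-plane profile forces both to be equalities, since otherwise the two integrals could not both equal $P^T$. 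You instead exhibit the phase-conjugate matched filter as a \emph{common} optimizer of both problems and transfer its optimal values to $\wbold_1^*$ and $\wbold_2^*$. Your route buys a concrete witness and a rigorous handle on the $p$-maximization half (Cauchy--Schwarz), but it incurs two burdens the paper avoids: (i) you must argue the matched filter remains near-optimal after $r$-bit quantization (the paper's contradiction argument applies to the quantized optima directly, whatever they are), and (ii) your central claim --- that the peak-maximizing excitation is also the $R$-minimizing one because any competitor ``cannot re-focus into a competing concentrated spot'' --- is asserted via a diffraction-limit heuristic rather than derived. To be fair, (ii) is at essentially the same level of rigor as the paper's own unproven assertion that both optimal beams are Gaussian-like and sidelobe-free; both arguments ultimately rest on the same physical premises (energy conservation from negligible back-lobes, absence of grating lobes from non-sparsity, unimodal focal-plane profile), and neither closes the gap you correctly identify as the main obstacle: a flat-topped competitor could in principle pack the same energy into a smaller disc without exceeding the matched filter's peak, so unimodality of the optima must be assumed, not just of the matched-filter beam.
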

     \begin{proof}
        First note that due to the assumption of interelement spacing not higher than a half wavelength, there exists a single main lobe (i.e.,  no grating lobes exist). This means that there only exists one focal region. Besides, the wave propagation of an aperture in the Fresnel zone can be well approximated as a Gaussian function around the focal point in the focal reference plane \cite{goodman2017introduction}. 
        Let $R_1=R(\wbold_1^*,\rbold^U,\eta)$. Considering problem {\bf P1} , from \eqref{eq:R},  we have
        \begin{align}
                   \label{eq:6454}
            &\int_{S_{R_1}} \partial \pbold(\wbold_1^*,\rbold').\widehat{\boldsymbol{a}}_nd s'
            =\eta  P^T
        \end{align}
        For problem {\bf P2}, we consider a BFR corresponding to $\wbold_2^*$ and the same value of $\eta$ denoted by $R_2=R(\wbold_2^*,\rbold^U,\eta)$. Therefore we have
            \begin{align}
                 \label{eq:6453}
            &
            \int_{S_{R_2}} \partial \pbold(\wbold_2^*,\rbold').\widehat{\boldsymbol{a}}_nd s'
            = \eta P^T
        \end{align}
        which leads to
         \begin{align}
            \label{eq:p12}
            \int_{S_{ R_1}} \partial \pbold(\wbold_1^*,\rbold').\widehat{\boldsymbol{a}}_nd s'
            = 
            \int_{S_{R_2}} \partial \pbold(\wbold_2^*,\rbold').\widehat{\boldsymbol{a}}_nd s'
        \end{align}
        On the other hand, from the definition of {\bf P1}, we have  
        \begin{align}
            \label{P1_infered_ineqalities_a}
            R_2 \geq R_1
        \end{align}
        Considering that {\bf P2} corresponds to the maximum power level at the focal point, we have
        \begin{align}
        \label{P1_infered_ineqalities_c}
             \partial \pbold(\wbold_2^*,\rbold^U).\widehat{\boldsymbol{a}}_n\geq \partial \pbold(\wbold_1^*,\rbold^U).\widehat{\boldsymbol{a}}_n
        \end{align}
         From this, together with the Gaussian-like beams of       $p(\wbold_2^*,\rbold')$ and $p(\wbold_1^*,\rbold')$ in the focal plane, one can infer that if equality does not hold for 
        either of  \eqref{P1_infered_ineqalities_a} or \eqref{P1_infered_ineqalities_c}, the right hand of \eqref{eq:p12} becomes greater than the left hand. This completes the proof.
     \end{proof}
}

 Based on Theorem \ref{clm:1}, we can interchangeably deal with problem \textbf{P2} instead of the original problem \textbf{P1}. From \eqref{eq:power_def}, since the signal and noise are uncorrelated, the optimization problem \textbf{P2} is written as:
 \begin{subequations}
     \label{eq:opt22}
	\begin{align}
	    \max_{\wbold} & \ \wbold^H \boldsymbol{Q} \wbold
        \\
        \label{eq:opt2_const_codebook}
        \mathrm{subject\ to:} & \ \wbold \in \mathcal{W} \\
        \mathrm{phy.\ const.:} & \ \lVert \boldsymbol{r}^a-\boldsymbol{r}^{U}\rVert\in [\underline{D},\overline{D}]
	\end{align}
 \end{subequations}
 where $\boldsymbol{Q}=\hbold \hbold^H$. 
 The optimization problem \eqref{eq:opt22} requires exact estimation of the channel gains $\hbold$ for all of the array elements. If $\hbold$ is exactly estimated, and we allow continuous phase values\footnote{For the case of exact CSI estimation and quantized phase according to constraint \eqref{eq:opt2_const_codebook}, a straightforward solution scheme is to first quantize the phase space corresponding to the solution of the non-quantized convex problem (i.e., relaxing constraint \eqref{eq:opt2_const_codebook}), and then search among the $k$ nearest neighbors to find the one corresponding to the highest objective value.}  for $\mathcal{W}$, \eqref{eq:opt2} is a convex quadratic optimization problem because $\boldsymbol{Q}$ is a positive definite matrix.
 However, as mentioned earlier, the exact estimation of the channel gains of ELPMs in the 3D-wavefront Fresnel zone is a challenging issue compared to that of the 2D-wavefront far-field region, and besides, a tiny CSI estimation error for some of the array elements results in the UE to lie in the blind zone of the radiated beam. Considering these practical limitations, instead of dealing with problem \eqref{eq:opt2} which requires exact estimation of the channel gains of all antenna elements, we consider the original problem \textbf{P2}, and propose a novel DRL scheme for finding the optimal beamfocusing vector through measuring the available received power $p(\wbold,\rbold^U)$ without requiring the exact or estimated knowledge of $\hbold$.
 

\section{Proposed Solution Scheme}
As stated before, we are pursuing a CSI-independent solution scheme to the stated SBF problem, which explicitly implies that conventional convex/non-convex solution schemes are not applicable here. DRL has proven to be one of the best ML schemes to learn the solution to complex optimization problems. An efficient DRL-based SBF mechanism for ELPMs requires a large number of antenna elements each having a $2^r$ bit quantized-phase domain. This results in a very large action space, which in turn requires a very high number of neurons in the hidden layers of deep NNs leading to the extremely high computational complexity of the DRL algorithm. For example, for a $60\times 60$ 4-bit ELPM, the action space consists of $16^{3600}$ vectors which is extremely large and can not be directly handled through conventional ML methods.
On the other hand, a sharp SBF requires the DRL to tune the concentrated power in a very small spot-like zone in the 3D geometrical space using a 2D/3D antenna array structure. In other words, a 1D linear antenna array can only focus the beam on a line with infinite points (fan-beam), and thus it may not achieve the SBF; this also reveals the fact that by even considering the same number of array elements, the convergence rate of a DRL algorithm for a 2D array is much lower than that of a 1D array. This is why no work in the literature exists dealing with  ML-based CSI-independent SBF for 2D large-scale arrays. 

To tackle these issues, this section is organized as follows. First, we propose the overall low-complexity distributed structure for the SBF consisting of many sub-array modules, as presented in subsection A. For each sub-array module, instead of the existing discrete ML schemes such as deep Q network (DQN), double DQN, and DuelDQN, which all have serious scalability issues for searching in such extreme action-space domains, we use a more scalable approach by employing a quantized revision of the original continuous-action-space TD3-DDPG DRL scheme. This is fully described in subsection B. Finally, the obtained results for all sub-array modules are employed to form the overall solution as discussed in subsection C.

\subsection{The overall structure of the proposed system}

\begin{figure}
		\centering
		\includegraphics [width=254pt]{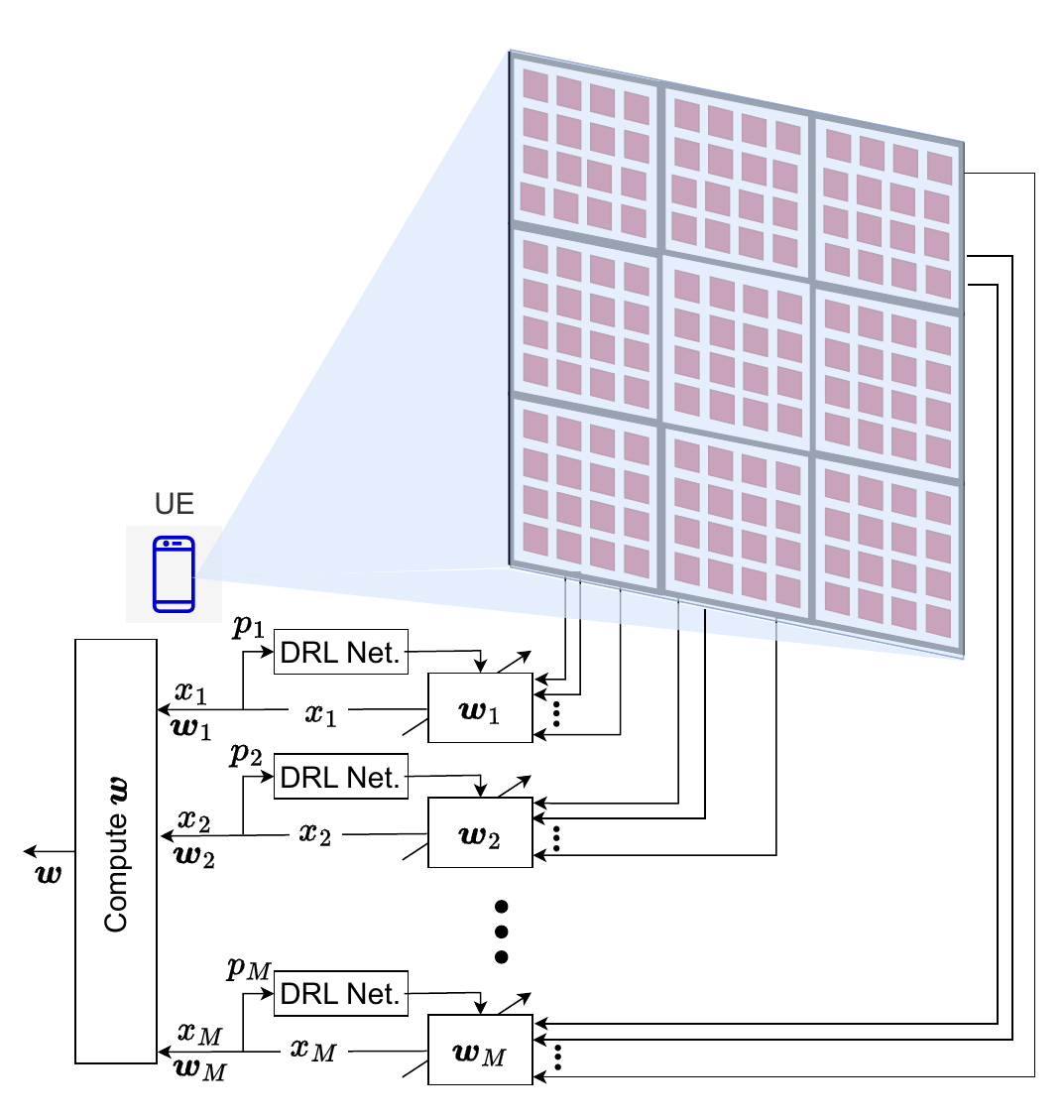} \\
		\caption{Hardware structure of the proposed modular SBF system for a sample case consisting of $M=9$ modules.
		} 
		\label{fig:structure_hardware}
\end{figure}
Consider a modular array structure wherein the antenna is comprised of $M$ similar sub-arrays each having $N'=N/M$ radiating elements. 
An example of such a modular array system comprising $M=9$ modules is depicted in Fig. \ref{fig:structure_hardware}.
    Let $m\in\mathcal{M}$ be a given  sub-array module index where $\mathcal{M}=\{1,2,...,M\}$, $\wbold_m$ be the phase shift coefficient vector of the $m$'th  module, and
     $w_{mi}$ be the $i$'th element of $\wbold_m$.
     Each subarray $m$ is equipped with a DRL for tuning $\wbold_m$. For each iteration of the learning process of each module $m$, the corresponding DRL agent applies the vector $\wbold_m$, resulting in the signal $x_m$, and uses the measured power $p_m=|x_m|^2$ to update $\wbold_m$ for the next time step.
 Without loss of generality we consider that $\wbold_m$ and $\wbold$ are simply related as follows:
     \begin{align}
         \wbold=[ \underbrace{w_1,...,w_{N'}}_{\wbold_1},
         \underbrace{w_{N'+1},...,w_{2N'}}_{\wbold_2},
         ...,
         \underbrace{w_{MN'-N'+1},...,w_{MN'}}_{\wbold_M}
         ]^T.
     \end{align}
     Instead of directly finding the optimal high dimensional vector $\wbold$ by using the original high-complexity problem \textbf{P2}, we first obtain the solution to the set of following lower dimensional optimization problems for each sub-array module $m$.
     \begin{subequations}
     \label{eq:optm}
	\begin{align}
	    \max_{\wbold_m} & \ p_m(\wbold_m,\rbold^U) 
        \\
        \mathrm{subject\ to:} & \ \wbold_m \in \mathcal{W}^{N'} \\
        \label{eq:optm_fres}
        \mathrm{phy.\ const.:} & \
        \lVert \rbold^{sub}_m-\boldsymbol{r}^{U}\rVert \in [\underline{D}_0, \overline{D}]
	\end{align}
 \end{subequations}
  where  $p_m(\wbold_m,\rbold^U)$ is the UE received power at $\rbold^U$ radiated from sub-array $m$ corresponding to $\wbold_m$, $\rbold^{sub}_m$ is the center of the sub-array aperture $m$, and $\underline{D}_0$ is the Fresnel limit of each sub-array aperture. Once the optimal solution for each sub-array module is found, the optimal solution of the original problem \textbf{P2} can be obtained in a scheme explained in section \ref{sec:overall_codebook}.
 \begin{remark}
     Please note that in \eqref{eq:optm_fres}, the Fresnel zone is estimated as
     \begin{align}
        \label{eq:region}
         [\underline{D}_0, \overline{D}]=\underbrace{[\underline{D}_0, \underline{D}]}_{A}
         \cup
         \underbrace{[\underline{D}, \overline{D}]}_{B}
     \end{align}
     As will be explained later, all sub-array PMs will finally constitute a unified ELPM with diameter $D$ and Fresnel zone $B=[\underline{D}, \overline{D}]$, 
     however, we have extended this region to $A \cup B$ leading to the coverage of a larger 3D area for valid Fresnel SBF.
     For the case where the UE is located inside the region $A=[\underline{D}_0, \underline{D}]$, a subset of PM modules can be deactivated to decrease the effective diameter of the ELPM, such that the UE lies within the Fresnel zone.
 \end{remark}

\begin{remark}
    For the case of the WPT application, if the UE is so close to the ELPM, due to the non-radiating near-field power induction, the maximal power transfer might not correspond to a sharp focal point, and thus, if the application is only concerned with maximum power transfer, we may assume $\underline{D}_0=0$ in \eqref{eq:region}. Similarly,  for the far-field maximum directivity application, we may assume $\overline{D}=\infty$ in \eqref{eq:region}. Therefore, \textit{if the SBF is not of concern, our proposed low-complexity DRL-based structure can apply to WPT, beamforming, and power beaming for any of the near-field and far-field cases.}
\end{remark}

\subsection{Sub-array beamfocusing through TD3-DRL scheme}
\begin{figure*}
		\centering
		\includegraphics [width=344pt]{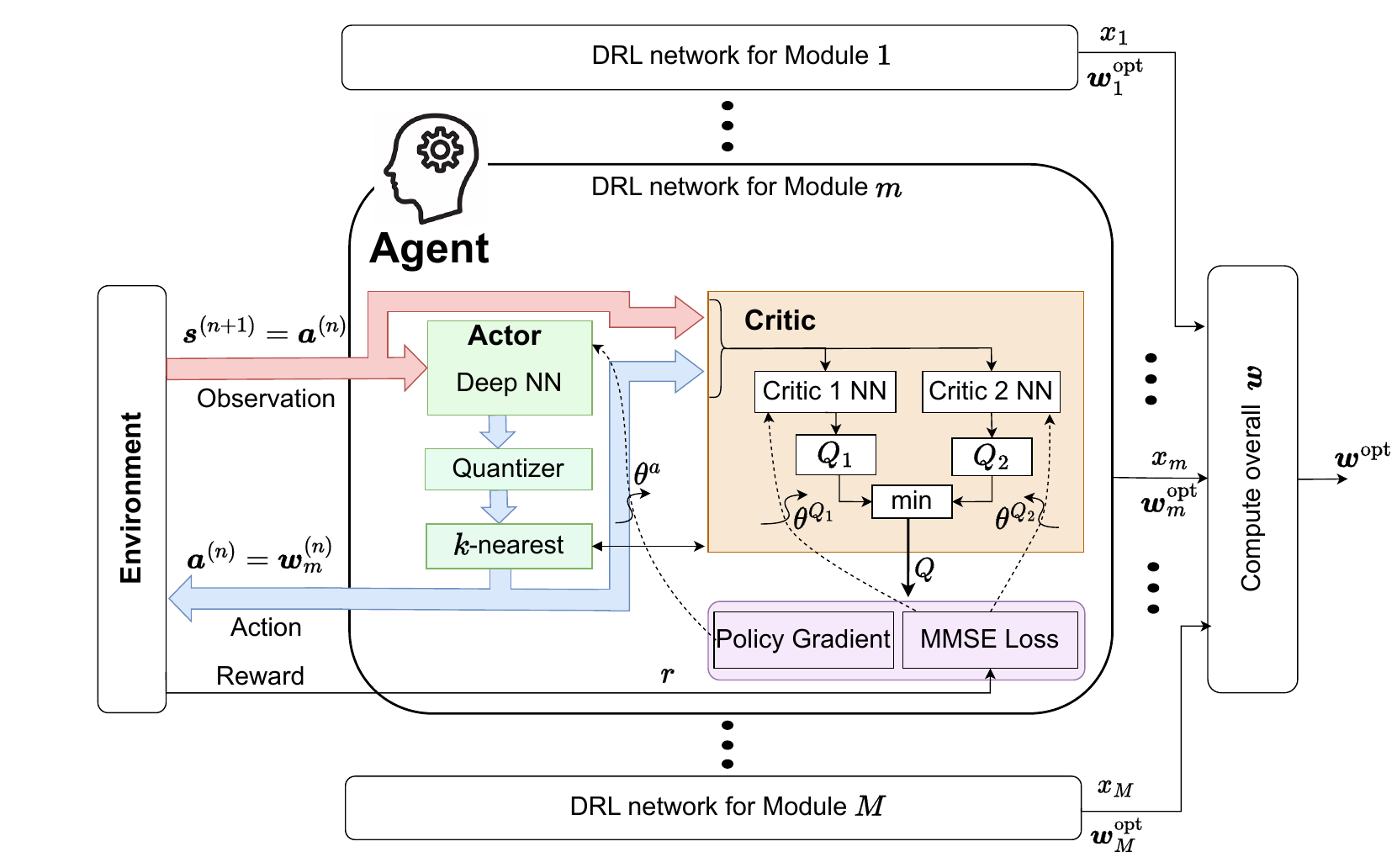} \\
		\caption{TD3-DRL based software structure for the proposed modular Fresnel zone SBF system showing the process of training sub-array modules and the determination of the final beamforming vector after convergence.
		} 
		\label{fig:structure_drl}
\end{figure*}
In this part, we propose an efficient CSI-independent ML-based beamfocusing scheme for each module $m$ to obtain the solution of \eqref{eq:optm}. The proposed algorithm is a revised version of the  TD3 scheme. TD3 is a novel model-free, online, off-policy RL method that is designed based on the actor-critic model \cite{fujimoto2018addressing}. The original TD3-DRL algorithm is designed for continuous action spaces. However, since the PM phase shifters in our problem only support discrete quantized phase values, we have revised the TD3-DRL scheme to accommodate the discrete action space as will be explained later in this section. Fig. \ref{fig:structure_drl} depicts the overall beamfocusing structure for an ELPM consisting of $M$ PM modules, wherein the detailed structure of the DRL network of module $m$ is illustrated. In what follows we explain the structure of the TD3-DRL for module $m$, as well as the overall optimal beamforming vector determination scheme in more detail. As seen in Fig. \ref{fig:structure_drl}, the TD3-DRL network for each module $m$ directly interacts with the \textit{environment} and consists of three elements of \textit{actor}, \textit{critic}, and \textit{agent}. Next, we explain each of these items.

\subsubsection{Environment} For each training step $n$, the environment receives the \textbf{action} $\boldsymbol{a}^{(n)}$ from the actor-network. Any action taken, results in some specific observation as a \textbf{state} vector denoted by $\boldsymbol{s}^{(n)}$, and a scalar \textbf{reward} denoted by $r^{(n)}$. In our problem, the action is the assigned beamforming vector for sub-array module $m$ denoted by $\boldsymbol{a}^{(n)}=\wbold_m^{(n)}\in\mathcal{W}^{N'}$. We consider that the state for each time step $n$ is the action at the previous time step, therefore, we have $\boldsymbol{s}^{(n+1)}=\boldsymbol{a}^{(n)}$. The objective is to find a beamforming vector leading to the highest power value at $\rbold^U$. For each time step $n$, the environment measures the power $p_m(\wbold_m,\rbold^U)$. We assign a unity reward for the case when the measured power of the current time step is increasing compared to the previous step, and otherwise, the reward is considered as -1, i.e.,
\begin{numcases}{r^{(n)}=}
\label{eq:reward}
+1, & if $p_m(\wbold_m^{(n)},\rbold^U)>p_m(\wbold_m^{(n-1)},\rbold^U)$
\\
-1, & otherwise 
\nonumber
\end{numcases}
\subsubsection{Actor} The actor is a DNN with network parameters $\boldsymbol{\theta}^a$ whose input is the environment state vector $\boldsymbol{s}^{(n)}\in\mathcal{S}$, and its output is the action vector  $\mathbf{a}^{(n)}\in\mathcal{A}$ using the policy $\boldsymbol{\pi}^a$:
\begin{align}
    \boldsymbol{\pi}^a(.|\thetabold^a): \mathcal{S} \rightarrow \mathcal{A}
\end{align}
For each sub-array $m$ at each iteration $n$, the input of the actor-network is simply a beamforming vector at the corresponding iteration; therefore $\mathcal{A}=\mathcal{W}^{N'}$. The actor-network in a standard TD3 structure has a continuous action space; in our problem, this corresponds to $N'$ complex numbers, each having continuous phase domain $[0,2\pi]$.
Based on what stated so far, for each time step $n$, the actor in our problem simply gets the current vector $\wbold_m^{(n)}$ and decides on the next step vector $\wbold_m^{(n+1)}$. Thus, as seen in Fig. \ref{fig:structure_drl}, we need a \textbf{quantizer} to map the continuous output beamforming vector to the nearest valid quantized vector from the space $\mathcal{W}^{N'}$. 
On the other hand, due to the quantization error, it is possible that the resulting quantized action does not exactly correspond to the optimal action. Therefore, similar to \cite{dulac2015deep}, we also search through the $k$ nearest neighbors (\textbf{knn}) and select the one corresponding to the highest Q-value of the critic network. An explanation of the Q value relating to the critic network will be presented later in this section.

\textbf{Proposed knn Algorithm:}
The $k$ nearest neighbors are the first $k$ actions $\widehat{\wbold}_m$ whose $\mathrm{L}2$ distance norm are closest to the desired action $\wbold_m$. The existing $k$nn algorithms for large domain spaces are non-exact approximations due to searching in only a random subset of the domain space.
For example, an antenna array of 64 elements with 4-bit digital phase shifters has $16^{64}$ states to be searched which is computationally unaffordable.  In practice, however, an $r$-bit quantizer generally divides the phase space  $[0,2\pi]$ into $2^r$ equally spaced levels. For this case, we devise the very low-complexity \textbf{Algorithm 1} to exactly find the knn candidates. As seen in Algorithm 1, the quantization difference level variable $L$ is initially set to unity in line 4 (meaning that the initial knn vectors are different with the original input vector in only one bit), and then the tensor $b$ is filled with one vector as $b_1=[1,...,N',-1,...,-N']$. For each iteration, one of the elements of $b_1$ is randomly selected, and then the corresponding element $i$ of $\widehat{\wbold}_m$ is changed by one quantization level upward in line 15 or downward in line 17 depending on the sign of the selected random number. Finally, the modified vector $\widehat{\wbold}_m$ is added to the set of $k_{nn}$ vectors, and the selected random item is removed from the buffer $b_1$ in lines 22 and 23 respectively. If all elements of $b_1$ are processed and the $k_{nn}$ buffer is not still filled with $k$ vectors, the quantization level variable $L$ is increased to 2 and the procedure continues by initially filling the buffers $b_1$ and $b_2$ and then sequentially adding random nearest vectors $\widehat{\wbold}_m$ each having 2 quantization level differences with the original vector $\widehat{\wbold}_m$. The algorithm continues until the $k_{nn}$ is filled with $k$ nearest neighbors of $\wbold_m$. One could verify that the complexity of the presented knn algorithm is $\mathcal{O}(kN')$.
	
	\begin{algorithm}[t]
		\caption{\small\!: Proposed low-complexity knn algorithm}
		\begin{algorithmic}[1]
			\State \textbf{Initialize:} Let $k_{nn}=\{\}$ be the set of knn vectors, $b=\{\}$ be a tensor of temp buffers, $k'=0$ be the current neighbor index, and $L=0$  indicating the level of  quantization difference between the neighbor vector $\widehat{\wbold}_m$ and the original vector $\wbold_m$;
		\While{$k'<k$} 
        \If{$b$ is empty}
        
            \State $L\leftarrow L+1$;
            
            \For{$l=1$ to $L$}
            
                \State $b_l\leftarrow[1,2,...,N',-1,-2,...,-N']$; 
            
            \EndFor
        \EndIf
        \State
        Initialize the candidate neighbor $\widehat{\wbold}_m \leftarrow \wbold_m$;
        \For{$l=1:L$}
            
            \State
            Set $r_l$ a random integer between 1 and $|b_l|$;

            \State
            $i\leftarrow|b_l(r_l)|, s\leftarrow sign(b_l(r_l))$;

            \State
            Set $ind_l$ as the index of the element $\widehat{\wbold}_{mi}$ in $\mathcal{W}$.

            \If{$s>0$ and $ind_l<2^r$}
            
            \State
            $\widehat{\wbold}_{mi}=\mathcal{W}(ind_l+1)$;

            \ElsIf{$s<0$ and $ind_l>1$}

            \State
            $\widehat{\wbold}_{mi}=\mathcal{W}(ind_l-1)$;
            
            \EndIf

        \EndFor
        \If{$\widehat{\wbold}_m\neq \wbold_m$}
            \State 
            $k'=k'+1$;
            \State
            Add $\widehat{\wbold}_m$ to the $k_{nn}$ buffer.
            \State
            $b_l=b_l/ \{b_l(r_l)\}$ for $1\leq l\leq L$
        \EndIf

        \EndWhile
        \end{algorithmic}
	\end{algorithm}

\subsubsection{Critic}
As seen in Fig. \ref{fig:structure_drl}, the TD3 structure is composed of two NNs with parameters $\boldsymbol{\theta}^{Q_i},i\in\{1,2\}$, each estimating the corresponding $Q$ value. The $Q$ value is a meter used in Deep $Q$ Learning schemes which estimates how good an action is. In the standard TD3 network, the state and action are first concatenated and serve as the input of both critic NNs and the corresponding Q value is then obtained in the output of the networks, i.e.,
\begin{align}
    \label{eq:critic_main}
    Q_i(., .|\boldsymbol{\theta}^{^{Q_i}}): \mathcal{S} \times \mathcal{A} \rightarrow \mathbb{R}, \ \forall i\in\{1,2\}
\end{align}
Considering the state and action space in our problem, \eqref{eq:critic_main} corresponds to \eqref{eq:critic_main2} as follows:
\begin{align}
    \label{eq:critic_main2}
    Q_i(.,.|\thetabold^{^{Q_i}}): \mathcal{W}^{N'} \times \mathcal{W}^{N'}\rightarrow \mathbb{R}, \ \forall i\in\{1,2\}
\end{align}
Finally the overall $Q$ value is obtained as $Q=\min{(Q_1,Q_2)}$.

\subsubsection{Agent} The agent is responsible for training the actor and critic NNs and controlling the trend of the learning process and convergence of the training scheme. In TD3, in addition to the actor and critic NNs, there exists a \textit{target actor} NN denoted by $\pibold^{a,t}(.|\thetabold^{a,t})$ and two \textit{target critic} NNs denoted by $Q_i^t(.,.|\thetabold^{Q_i,t}), \forall i\in\{1,2\}$. At each time step $n$, using the current observation state $\sbold^{(n)}$, the current action is selected as $\abold^{(n)}=\pibold^a(\sbold^{(n)}|\thetabold^a)+\nu^a$ where $\nu^a$ is a stochastic exploration noise which is obtained based on the noise model.
In the standard TD3 problem, the obtained action $\abold^{(n)}$ is  applied to the environment, and then the reward $r^{(n)}$ is calculated and the next state $\sbold^{(n+1)}$ is observed. The experience $\left(\sbold^{(n)},\abold^{(n)},r^{(n)},\sbold^{(n+1)}\right)$ is then stored and added to the experience buffer. In our problem, however, the action $\abold^{(n)}$ is additionally quantized and the knn actions are computed. After that, the best action from the set of knn actions (the one corresponding to the highest $Q$ value) is replaced by the original action $\abold^{(n)}$ before storing it in the experience buffer. A random minibatch $\left(\sbold^{(k)},\abold^{(k)},r^{(k)},\sbold^{(k+1)}\right)$ of size $K$ is then sampled from the experience buffer and for each sample $k$ of the minibatch, the target $y^{(k)}$ is calculated based on Q-learning principle as follows:
\begin{multline}
    \label{eq:yk}
    y^{(k)}=r^{(k)}
    \\
    +
    \gamma \min_{i\in \{1,2\}} \left( Q_i^t(\sbold^{(k+1)},
    f^{clip}(\pibold^{a,t}(\sbold^{(k+1)}|\thetabold^{a,t})+\nu^a)|\thetabold^{Q_i,t})
    \right)
\end{multline}
where $f^{clip}$ is a clipping function that limits the computed action plus noise to the minimum and maximum allowed thresholds. Once $y^{(k)}$ is computed for all $K$ minibatch samples of the experience buffer, it is time to update the parameters of the actor and critic NNs. The parameters of the critic NNs are updated at each time step $n$ by minimizing the following mean square error loss function:
\begin{align}
    \label{eq:L}
    L_i^{(n)}=\frac{1}{K} \sum_{k}\left( 
        y^{(k)}-Q_i(\sbold^{(k)},\abold^{(k)}|\thetabold^{Q_i})
    \right)^2, \ \forall i\in\{1,2\}
\end{align}
For each $T_1$ step, the actor parameters are updated using the sampled policy gradient to maximize the expected discounted reward as follows:
\begin{multline}
    \label{eq:GJ}
    \nabla_{\thetabold^a} J(\thetabold^a)) \approx \frac{1}{K} \sum_{k} 
    \nabla_{\abold}
    \left(
    \min_{i\in\{1,2\}}{Q_i(\sbold^{(k)},\abold|\thetabold^{Q_i})}
    \right)
    \\
    \times
    \nabla_{\thetabold^a} \pibold^a(\sbold^{(k)}|\thetabold^a)
\end{multline}
and finally, in every $T_2$ step (where $T_2>T_1$), the target actor and target critic NNs' parameters are updated as follows:
\begin{subequations}
\label{eq:target_update}
\begin{align}
    \thetabold^{a,t}&\leftarrow\tau \thetabold^a + (1-\tau)\thetabold^{a,t}  
    \\
    \thetabold^{Q_i^t}&\leftarrow\tau \thetabold^{Q_i} + (1-\tau)\thetabold^{Q_i^t}, \ \forall i\in\{1,2\}  
\end{align}
\end{subequations}
where $\tau$ is the target smoothing factor which is a small positive value. A detailed representation of the stated TD3 scheme is expressed in Algorithm 2.
\begin{algorithm}[t]
		\caption{\small\!: TD3-DRL scheme for each sub-array module $m$}
		\begin{algorithmic}[1]
           

            \State
            Initialize actor and critic NNs' parameters $\thetabold^{a}, \thetabold^{Q_1}, \thetabold^{Q_2}$ with random weights and set initial values of target NNs' parameters with $\thetabold^{a,t}\leftarrow\thetabold^{a}, \thetabold^{Q_1^t}\leftarrow\thetabold^{Q_1}, \thetabold^{Q_2^t}\leftarrow \thetabold^{Q_2}$.

            \State
            Initialize the exploration noise random process $\nu^a$, assign required memory to experience buffer $\mathcal{K}$,  and set proper values to minibatch size $K$, target smoothing factor $\tau$, discount factor $\gamma$, and update frequencies $T_1$ and $T_2$.

            \State
            Set an initial random state as $\sbold^{(1)}=\wbold^{(1)}_m\in\mathcal{W}^{N'}$.

            \For{\textbf{each} time-step $n$}
                
                \State  Set $\widehat{\abold}=\pibold^a(\sbold^{(n)}|\thetabold^a)+\nu^a$.

                \State
                Quantize the action $\widehat{\boldsymbol{a}}$, obtain knn actions from Algorithm 1, and obtain the final action $\abold^{(n)}$ as the one corresponding to the highest $Q$ value.

                \State Apply $\abold^{(n)}\equiv \wbold^{(n)}$ to the environment, get the reward $r^{(n)}$ from \eqref{eq:reward}, and the next state as $\sbold^{(n+1)}=\abold^{(n)}$.

                \State
                Add $\left(\sbold^{(n)},\abold^{(n)},r^{(n)},\sbold^{(n+1)}\right)\equiv \left(\wbold^{(n-1)},\wbold^{(n)},r^{(n)}\right)$ to the experience buffer $\mathcal{K}$.  

                \State
                Select a random minibatch $\left(\sbold^{(k)},\abold^{(k)},r^{(k)},\sbold^{(k+1)}\right)$ from $\mathcal{K}$ of size $K$.

                \State
                For each $k$ in the minibatch, update  $y^{(k)}$ from \eqref{eq:yk}.
                
                \State
                Update critic  parameters $\thetabold^{Q_i}, \forall i\in\{1,2\}$ using \eqref{eq:L}.

                \State
                Every $T_1$ step, update actor parameters $\thetabold^a$  using \eqref{eq:GJ}.

                \State
                Every $T_2$ steps update target NNs' parameters $\thetabold^{a,t}$ and $\thetabold^{Q_i^t}, \forall i\in\{1,2\}$ from \eqref{eq:target_update}.
                
            \EndFor
           
    \end{algorithmic}
\end{algorithm}

\subsection{Calculation of the overall beamforming vector from sub-array modules' beamforming vectors}
\label{sec:overall_codebook}
Once the optimal vector $\wboldopt_m$ for each module $m$ is obtained through the DRL scheme in Algorithm 2 (i.e., after all DRL networks converge), the optimal overall beamforming vector $\wboldopt$ should be calculated. It is seen from Fig. \ref{fig:structure_hardware} that if the vectors $\wboldopt_1$, ..., $\wboldopt_M$ are obtained in a way that all modules' output signals $x_1$, $x_2$, ..., $x_M$ are of the same phase, the resulting summation signal $x$ will be maximized, and thus the optimal beamforming vector is simply obtained through the concatenation of all modules' beamforming vectors, i.e., $\wboldopt=[\wboldopt_1, \wboldopt_2, ..., \wboldopt_M]$. In practice however, after finding the vectors of $\wboldopt_m$ for each module $m$ through the corresponding DRL network, there might exist an offset between the phases of $x_m$ and $x_n$ for each module $m\neq n$. 
 For the case of analog beamforming vector space (i.e., when $\measuredangle \wbold \in [0,2\pi]^{N'}$ and no quantization is required), one could verify that the following offset phase shift for all elements of $\wboldopt_m$ (for any $m\neq 1$) aligns the phases of $x_1$ and $x_m$ leading in the maximum power of the summation of signals.
 \begin{align}
     w^{'\mathrm{opt}}_{mi}=w^{\mathrm{opt}}_{mi} \times \exp\left({j}(\measuredangle x_1 - \measuredangle x_m)\right), \ \ \forall i,m\neq 1,
 \end{align}
 Finally, the following beamforming vector will align all phases leading to the maximum measured power.
 \begin{align}
    \label{eq:woptconcat}
    \wboldopt=[\wboldprimeopt_1, \wboldprimeopt_2, ..., \wboldprimeopt_M]     
 \end{align}
 For the case of quantized action-space, the elements of $\wbold_m^{'\mathrm{opt}}$ can be obtained as follows:
 \begin{multline}
\label{eq:wprimopt_quantized}w^{'\mathrm{opt}}_{mi}=w^{\mathrm{opt}}_{mi} \times \exp
     \left({j}
     \measuredangle \argmin_{\phi\in\phi^{\mathrm{quan}}}{\left| \phi- (\measuredangle x_1 - \measuredangle x_m)\right|}
     \right),
     \\
     \forall i,m\neq 1.
 \end{multline}
 and $\wboldopt$ is obtained from \eqref{eq:woptconcat}.
Based on what is stated so far, the proposed low-complexity modular DRL SBF algorithm is presented in Algorithm 3.
{\color{black}
\begin{remark}
    The proposed SBF mechanism is highly scalable in terms of the proposed hardware structure and software complexity. In terms of hardware structure, providing a lower BFR and/or a focal point at higher distances requires an aperture of higher dimension, which necessitates a higher number of subarray modules, without any change in the hardware structure. From the software and training point of view, each subarray module has its training agent which is trained independently of others and does not impose any processing load on other modules. This way, for example, if we consider that each module is equipped with an independent processing unit, the number of required processing units is linearly increased with the number of subarray modules/array elements, and thus, the required processing resources are a linear function of the number of array elements, which is completely scalable. The final block for calculating the overall beamfocusing vector from subarrays' vectors is a single-shot light computation component which does not affect the overall complexity.
\end{remark}
}

\begin{algorithm}[t]
    \caption{\small\!: Distributed DRL-based SBF Algorithm}
    \begin{algorithmic}[1]
        \State
        Initialize DRL network parameters for each antenna module $m\in\M$ according to steps 1-3 of Algorithm 2.

        \For{each time step $n$ until all DRLs converge}
         
        \For{each antenna sub-array $m\in\mathcal{M}$}
        
      \State 
        Train the DRL-NN parameters of module $m$ and obtain $\wbold^{(n)}_m$ from steps 5-13 of Algorithm 2.
        \EndFor
        
    \EndFor

    \State
    Set $\wboldopt_m\leftarrow \wbold_m^{(n)}, \ \forall m\in\M$.

    \State 
    Calculate $\wboldprimeopt_m$ from \eqref{eq:wprimopt_quantized}, $\ \forall m\in\M$.

    \State
    Obtain the final beamforming vector $\wboldopt$ from \eqref{eq:woptconcat}.
    
    \end{algorithmic}
\end{algorithm}
 \section{Numerical Results}
\begin{table*}
		\centering
		\caption{Simulation Parameters}
		\begin{tabular}{|l|l|l|l|}
	\hline
		\textbf{Parameter}
        & 
        \textbf{Description}
        &
        \textbf{Parameter}  
        &
        \textbf{Description} 
        \\
        \hline
        Frequency
        &
        $28$ GHz
        &
        Path-loss exponent ($\alpha$)
        &
        $2.7$
        \\
        Reflection coefficient ($\beta_{nl}, \forall n,l$) 
        & 
        $0.1$
        &
        PM sub-array elements	  &
        $6\times 6$
        \\
        ELPM number of modules
	&
        100
        &
        Room Dimensions
        & 
        $4 \times 4 \times 3\ \mathrm{m}^3$                  
		\\
        Exploration noise variance ($\nu^a$)
        & 
        $0.5$
        &
        Exploration noise decay rate
        &
        $10^{-5}$
        \\
        Exploration noise minimum
        &
        $10^{-3}$
        &
        Target policy variance
        &
        $0.1$
		\\	
        Target policy decay rate
        & 
        $10^{-4}$
        &
        $(T_1,T_2)$
        &
        (1,3)
        \\		
        \hline
        \end{tabular}
        \label{tbl:simulation_params}
\end{table*}
 \begin{figure}
		\centering
		\includegraphics [width=244pt]{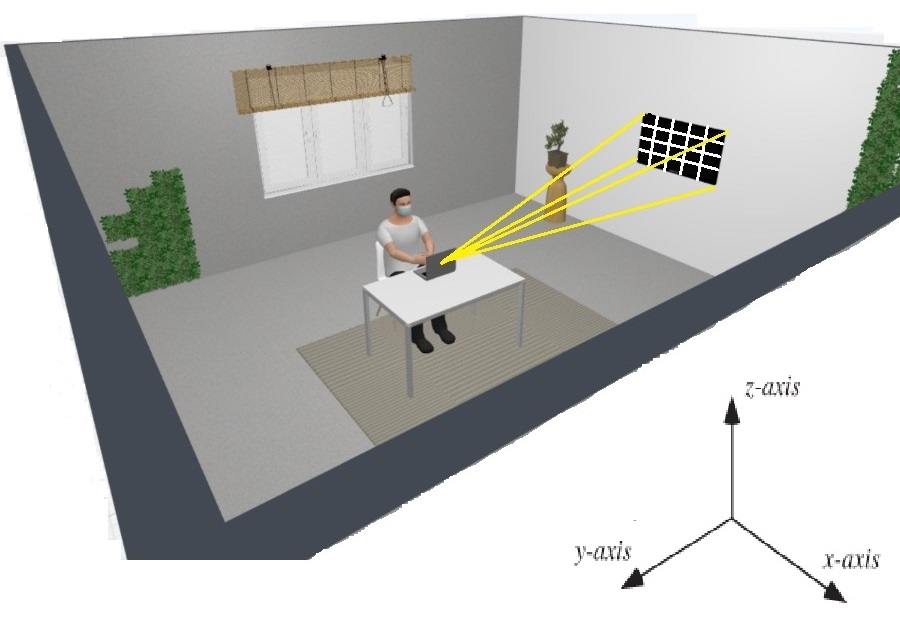} \\
		\vspace{-15pt}
		\caption{Simulation environment} 
		\label{fig:room}
\end{figure}
To verify the performance of the proposed structure, we have considered a large-scale 3600-element ELPM consisting of $M=10\times10=100$ sub-arrays, each having $N=6\times 6=36$ elements, wherein the horizontal and vertical distance between adjacent antenna elements is $\lambda/2$.
The reason for choosing sub-arrays of $6\times 6$ elements is that they can be trained in a reasonable number of iterations approaching very close to the global optimal target, as will be shown in the simulation results. Unless otherwise directly stated, the default simulation parameters are listed in table \ref{tbl:simulation_params}.
As seen in Fig. \ref{fig:room}, a $4\times4\times 3 \ \mathrm{m}^3$  room with walls, ceiling, and roof reflection coefficient of 0.1 is considered wherein the UE is located in front of the ELPM at distance $y$ and height $z=1.4$ m, and the ELPM first element (bottom-left) is located at the point $(1,0,1.5)$ m. Different values for the distance $y$ will be considered in the simulations.
For each sub-array module, {\color{black}inspired from \cite{li2021deep,pingpong}}, we have considered a DRL in which the actor NN starts with a normalization input layer followed by a $16N$-neuron fully connected layer (FCL), a Rectified Linear Unit (RELU) layer, a $16N$-neuron FCL, a RELU layer, and then
an FCL with $N$-neuron, a hyperbolic tangent (tanh) layer, and a scaling output layer to map the output space domain into $[-\pi, \pi]^N$. The critic NN for each of the two agents of the TD3-DRL network starts with $2N$-neuron input layer concatenating the observation and action inputs, followed by $32N$-neuron FCL, RELU, $16N$-neuron FCL, a tanh layer, and finally a 1 neuron fully connected output layer.

\subsection{Performance of the proposed scheme for a single sub-array PM}
First, we examine the performance and convergence of the proposed scheme using TD3 and its outperformance over DDPG
for a single sub-array aperture.
 \begin{figure}
    \centering
    \begin{tabular}{c} \includegraphics[width=254pt]{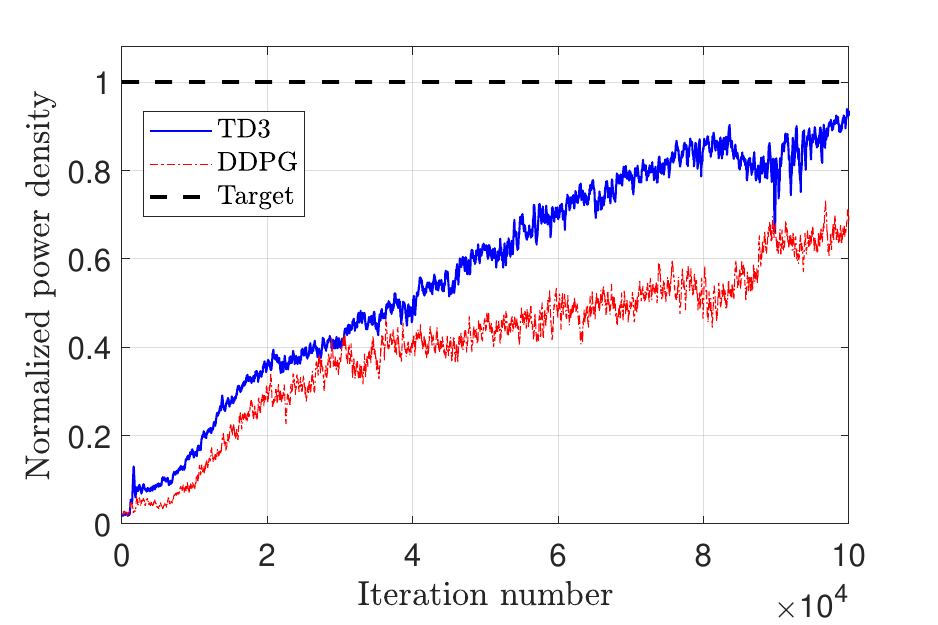}
    \\
    (a): $y=1.4$m
    \\
    \includegraphics[width=254pt]       {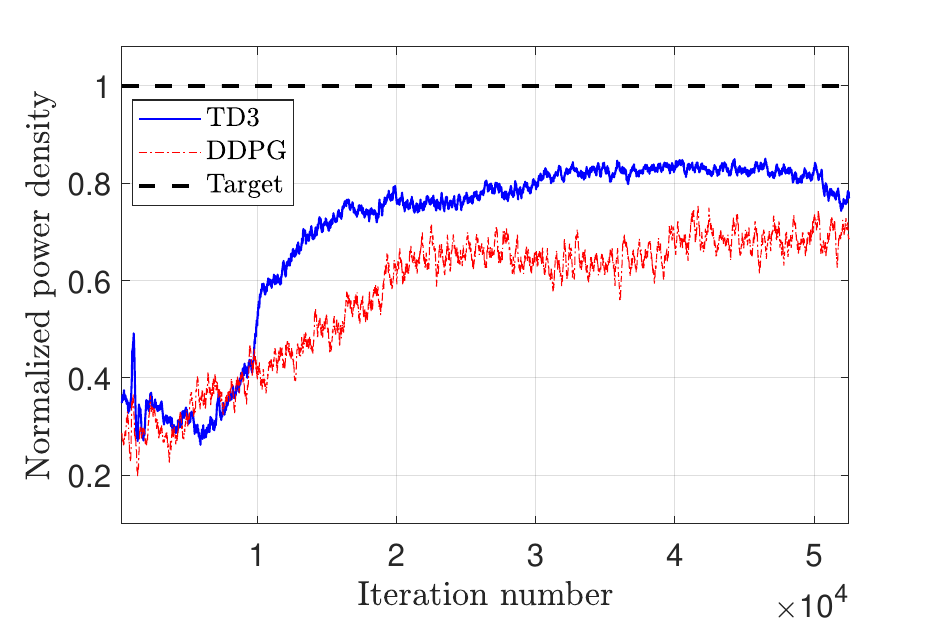}
    \\
    (b): $y=4$m
    \end{tabular}
    \caption{Normalized power density per training iteration number for TD3 and DDPG DRL schemes for the ELPM sub-array corresponding to the first row and first column of the PM array  for the distances of $y=1.4$ m and $y=4$ m.} 

    \label{fig:1}
\end{figure}
\begin{figure}
  \centering
    \includegraphics[width=254pt]{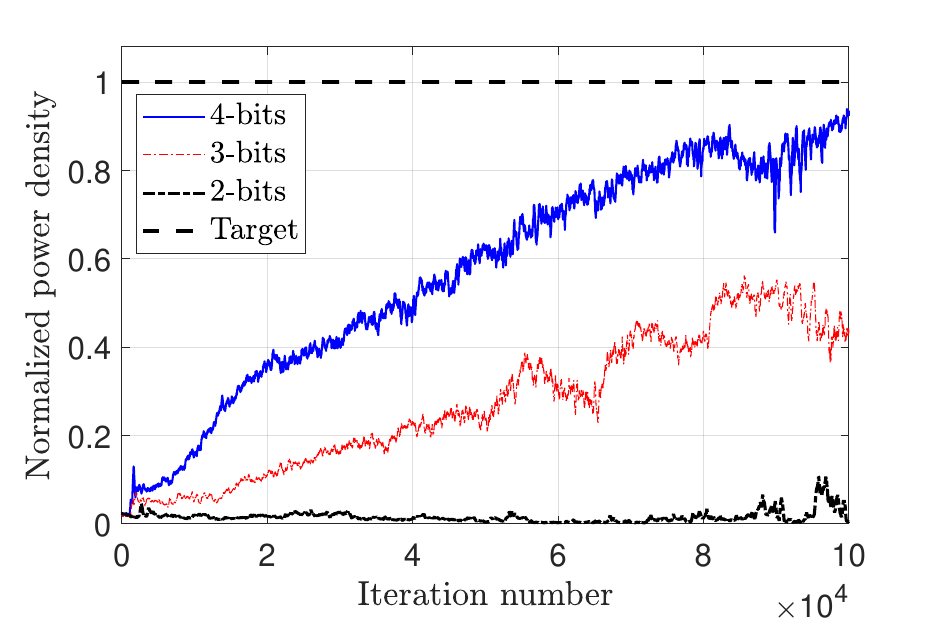}
    \caption{Normalized power density per training iteration number for the proposed TD3 schemes for the PM sub-array corresponding to the first row and first column of the ELPM array modules considering $2$, $3$, and $4$ bit phase shifters  ((sub-array corresponding to the first row and first column of the ELPM array modules)).} 
    \label{fig:sim_per_noof_bits}
\end{figure}
The received normalized power density at the UE location per training iteration number for TD3 and DDPG schemes is plotted in Fig. \ref{fig:1} for a single sub-array and two different UE distances of $1$ m and $4$ m from the ELPM aperture. 
First, it is seen that TD3 outperforms DDPG in terms of both convergence speed and received power level due to the multiple critic NNs and target policy smoothing.
For the case of 1.4 m UE distance, after about $40$k iterations, the TD3 algorithm overtakes DDPG and reaches 90 percent of the target in about $100$k repetitions. Meanwhile, DDPG has reached about 70 percent of the target. By increasing the distance to 4 meters (and getting closer to the far-field scenario), the same scenario as before is repeated, with the difference that the policy training speed has been increased. In general, all sub-array modules converge with a scenario close to the above. Due to the superiority of the TD3 algorithm, from now on, we only consider TD3 in the next simulation parts.

 To demonstrate how increasing the resolution of quantized phase shifters influences the performance and convergence speed of the proposed scheme, we have compared the simulation results for 2-bit, 3-bit, and 4-bit phase shifters as illustrated in Fig. \ref{fig:sim_per_noof_bits}. It is seen how increasing the quantization resolution improves the focused power level measured at the UE location. 
For example, after about $85$k iterations, the employed  4-bit and 3-bit phase shifters lead respectively to performance measures of about 90 percent and 52 percent of that relating to the ideal target power, however, the 2-bit phase shifter is seen to be almost of no use.

\begin{figure}
  \centering
    \includegraphics[width=254pt]{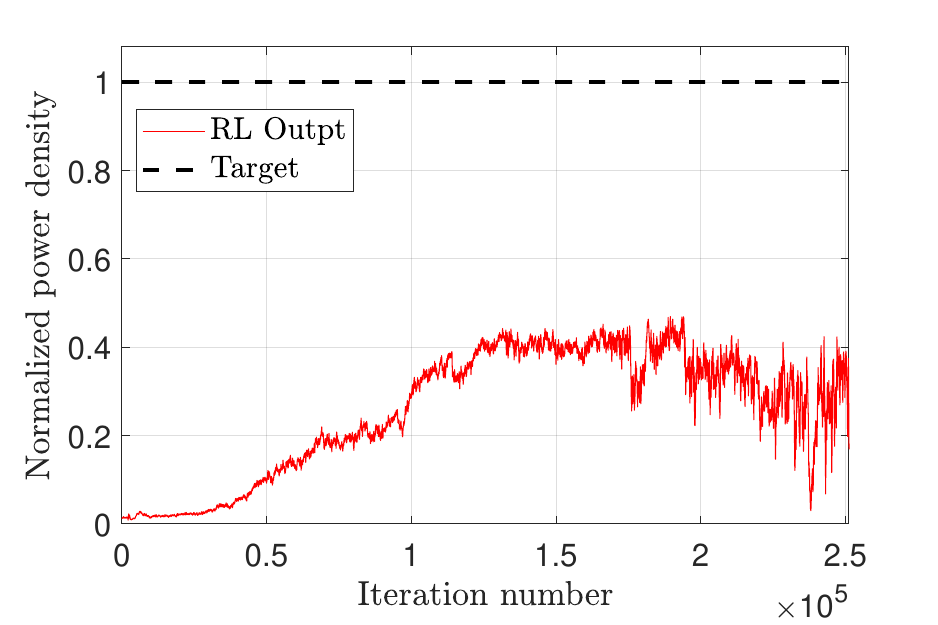}
\caption{Training trend for a $8\times 8$ PM} 
\label{fig:sim_8_8}
\end{figure}
\subsection{SBF performance of the proposed scheme for ELPM}
In the previous subsection, we analyzed the performance of the DRL-based beamfocusing for a single $6\times 6$ sub-array PM. A 3D spot beamfocusing requires an ELPM consisting of a much larger number of antenna elements (e.g., 3600 elements as considered in our simulations). The question that arises is: what is the cost of not using the proposed distributed method? Is it possible to apply the proposed single DRL-based sub-array structure to a large-scale array? As will be shown here, increasing the number of antenna elements leads to unaffordable computational complexity and convergence issues. To verify this, let's consider a single $4$-bit sub-array consisting of only $8\times 8$ elements instead of the original $6\times 6$ one. The performance per learning epoch is illustrated in Fig. \ref{fig:sim_8_8}. It is seen here that the highest performance is obtained after a large number of iterations (around $200$k) which is only about $47$ percent of the target value; remember that for the case of $6\times 6$ sub-array, we reached a performance of about $90$ percent of the target value after about $80$k iterations. This reveals that the achievement of SBF (which requires a few thousand antenna elements as shown in the following) is not possible through a single module, rather, it requires a distributed learning structure.

Now we study the exploitation of a large-scale $3600$ element ELPM consisting of $10\times 10$ sub-arrays each having $6\times 6$ elements, wherein the DRL-based SBF is implemented through Algorithm 3.
\begin{figure*}[t!]
  \begin{tabular}{c c c}
  \hspace{-10pt}
    \includegraphics[width=0.33\linewidth]{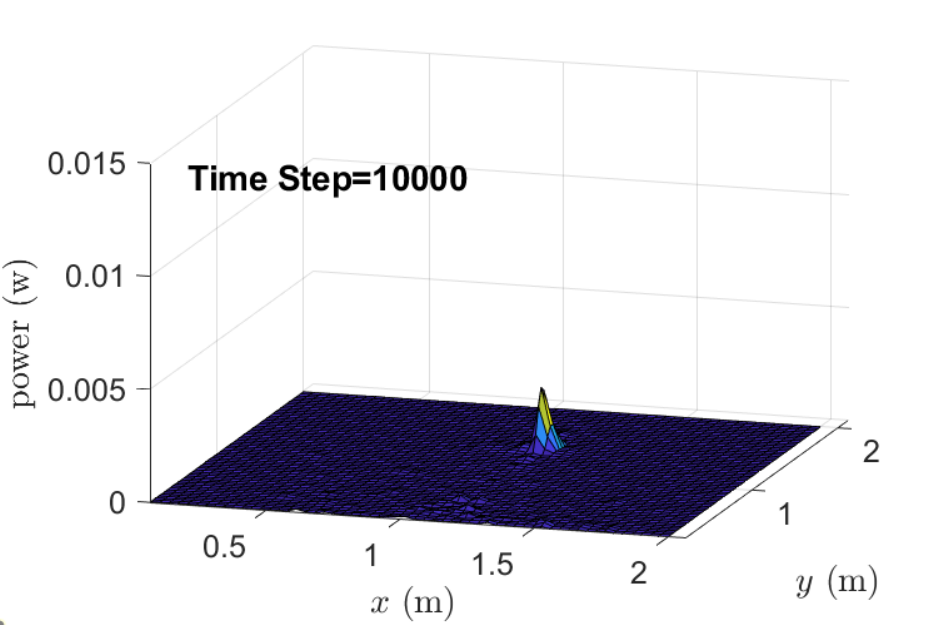}
    &
     \hspace{-10pt}
    \includegraphics[width=0.33\linewidth]{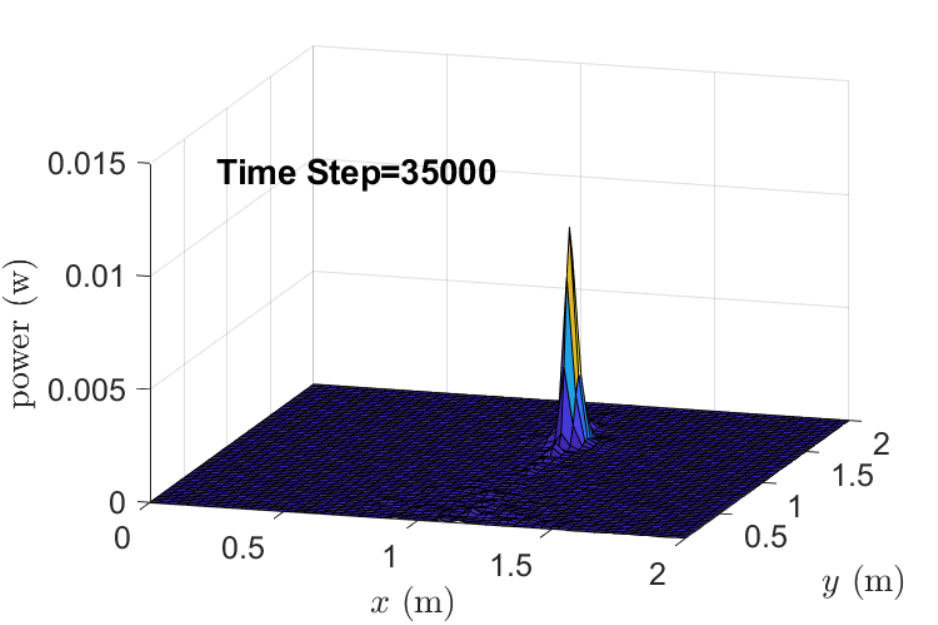}
    &
     \hspace{-10pt}
    \includegraphics[width=0.33\linewidth]{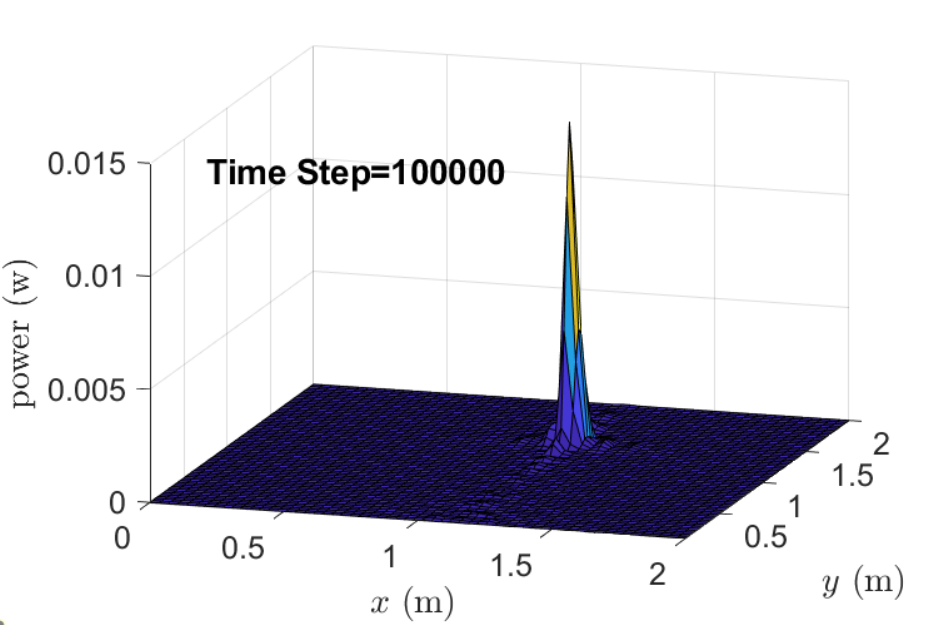}
\\
    (a) & (b) & (c) 
\end{tabular}
  \caption{Spatial power pattern around the UE location for the action vectors of the proposed DRL for the ELPM array corresponding to different training steps (a): iteration number: 10k, (b): iteration number: 35k, and (c): iteration number: 100k.}
  \label{fig:sim_3D_per_iteration}
\end{figure*}
First, we have demonstrated the spatial power distribution resulting from the radiating ELPM in the room environment in the planar surface parallel to $xy$-plane around the UE location point (at $z=1.4$ m)  for three iteration snapshots of $10$k, $35$k and $100$k as depicted in Fig. \ref{fig:sim_3D_per_iteration}.
It is seen that as the learning proceeds forward, the spatial power distribution moves toward the \textit{spot} point beamfocusing behavior by observing that the higher the training epoch, the more sharp spot-like beam at the DFP.
\begin{figure}
    \centering
    \includegraphics[width=254pt]
    {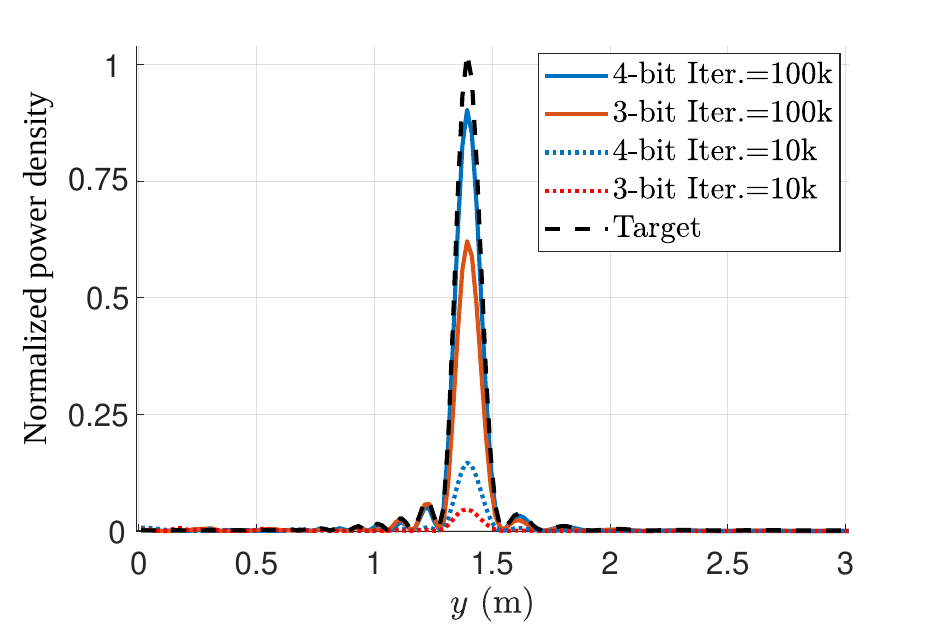}
    \caption{Power density per distance ($y$) for 3-bit and 4-bit phase shifters at iterations of 10k and 100k} 
    \label{fig:sim_2D_per_distance}
\end{figure}
This power concentration is more clearly illustrated in the 2D diagram of Fig. \ref{fig:sim_2D_per_distance}, wherein the power per distance ($y$) is depicted for 3-bit and 4-bit quantizers at the epochs of $10$k and $100$k. It is seen here how increasing the quantization level and training iteration leads to a sharper spot-like beam concentration. For example, it is seen that after $100$k training epochs for the 3-bit and 4-bit phase-shifters, the ratio of the peak measured power to the target power is about 60 and 90 percent, and the BFR corresponding to $\eta=80$\% at the UE location is about $8$ cm and $7$ cm respectively. 

\begin{figure}
  \centering
\includegraphics[width=254pt]{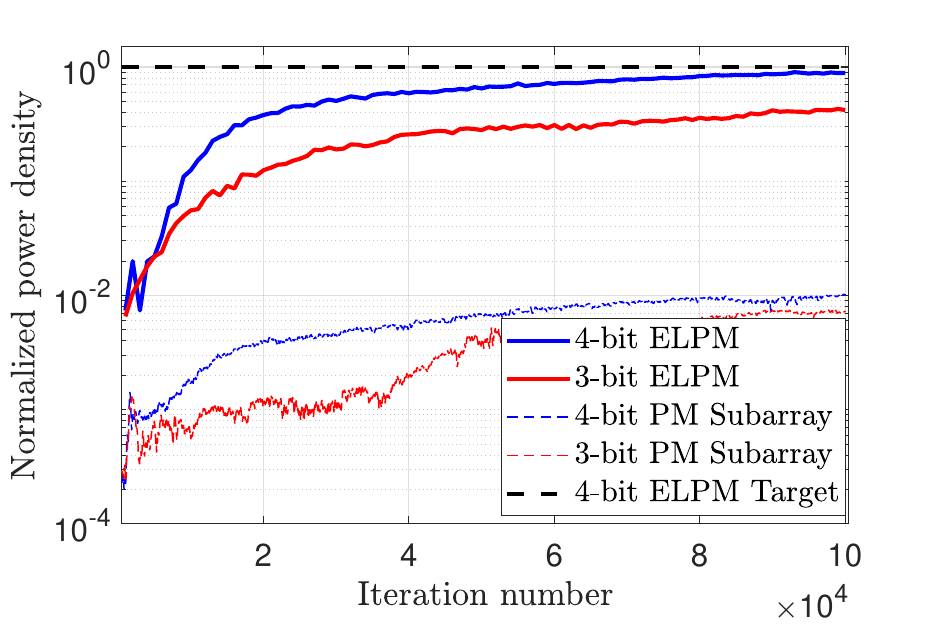}
\caption{Performance comparison of the proposed distributed modular scheme for large-scale $10\times 10$ module ELPM array versus that of a single module PM using $3$ and $4$ bit phase shifters.} 
\label{fig:sim_array_vs_subarray}
\end{figure}
To numerically verify how the proposed distributed modular structure leverages the SBF performance in comparison to the deployment of a single-module DRL-based 3D beamforming, we have compared the performance results of the 100 sub-array ELPM to that of a random single sub-array as seen in Fig. \ref{fig:sim_array_vs_subarray}. For each of the 3-bit or 4-bit phase shifters, the implementation of 100 sub-arrays leverages the measured power at the focal point of the order of 100 times the one obtained from a single sub-array, which reveals the effectiveness of the proposed structure.

\begin{figure}
  \centering
\includegraphics[width=254pt]{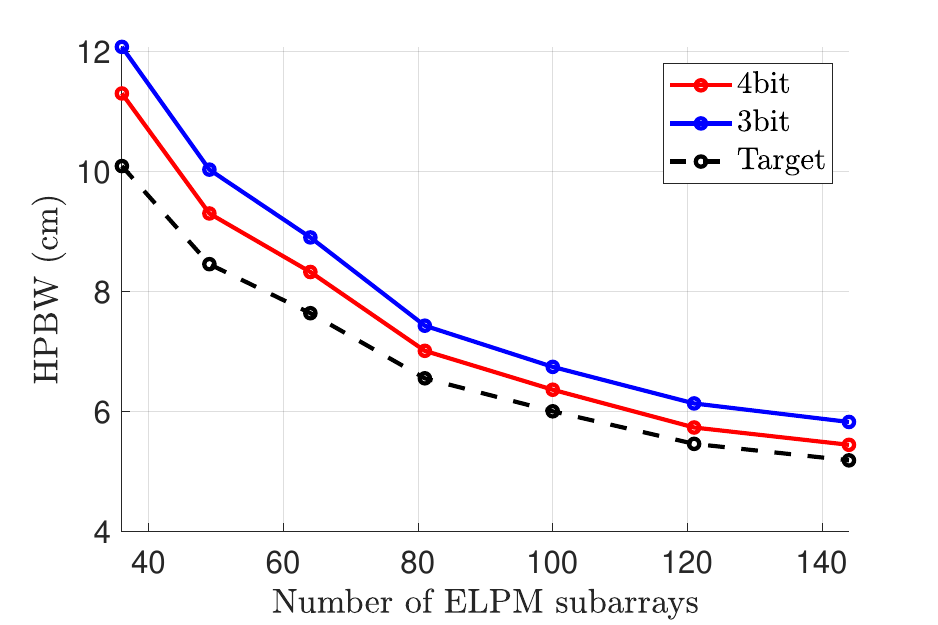}
\caption{Performance comparison of the proposed distributed modular scheme versus the number of subarray modules.} 
\label{fig:sim_hpbw}
\end{figure}
{\color{black}
Finally, we elaborate on the impact of the number of subarray modules on the system’s performance, as seen in Fig. \ref{fig:sim_hpbw}. The figure shows the achieved half-power beamwidth (HPBW) as a simple criterion showing how small the size of the beam is around the focal point, versus different numbers of subarrays ranging from $6\times 6=36$ to $12\times 12=144$. It is observed how increasing the number of subarrays results in a more concentrated power of the signal around the focal point in a non-linear way. For instance, increasing the number of subarrays 4 times from 36 to 144 results in the HPBW being approximately halved for each of the 3-bit TD3-DRL, 4-bit TD3-DRL, and target scenarios.
}
\section{Conclusion}	
\label{sec:conclusions}

In this paper, we presented a novel CSI-independent DRL-based scheme for Fresnel zone spot beamfocusing using ELPMs. To overcome the complexity and convergence issues, we introduced a distributed structure consisting of several sub-array PMs each equipped with a TD3-DRL optimizer. This modular setup enabled collaborative optimization of the transmitted power at the DFP, significantly reducing computational complexity while enhancing the convergence rate. The proposed scheme makes CSI-independent SBF for ELPMs quite feasible and computationally affordable in practice; this has not been the case in existing works in the literature so far. We provided multiple
numerical results to demonstrate the performance and benefits of our proposed structure.
Several intriguing research areas beckon for future exploration. These are as follows:
\begin{itemize}
    \item In this work, we considered that each subarray learns the optimal beamfocusing subvector with no initially trained policy. In practice, the beamfocusing submatrices of different subarrays are correlated to each other. An interesting research direction is to leverage this correlation by employing various transfer learning techniques.

    \item A completely CSI-independent algorithm inherently possesses a degree of complexity. When the exact or inexact CSI for a subset of array elements is available, the proposed DRL-based scheme can be modified to incorporate such data into the learning process to reduce computational complexity and enhance speed and performance.

    \item Other research directions include exploring how the whole array be divided optimally into the set of subarrays (i.e., finding optimal values of $N$ and $M$), investigating non-uniform distribution of each subarray antenna elements throughout the whole aperture, and finally considering scalable multi-focal SBF through multi-agent DRL-based schemes.  
\end{itemize}



\bibliographystyle{IEEEtran}
\bibliography{MyBib}

 \vspace{-10pt}
\begin{biography}[{\includegraphics[width=1in,height
=1.25in,clip,keepaspectratio]{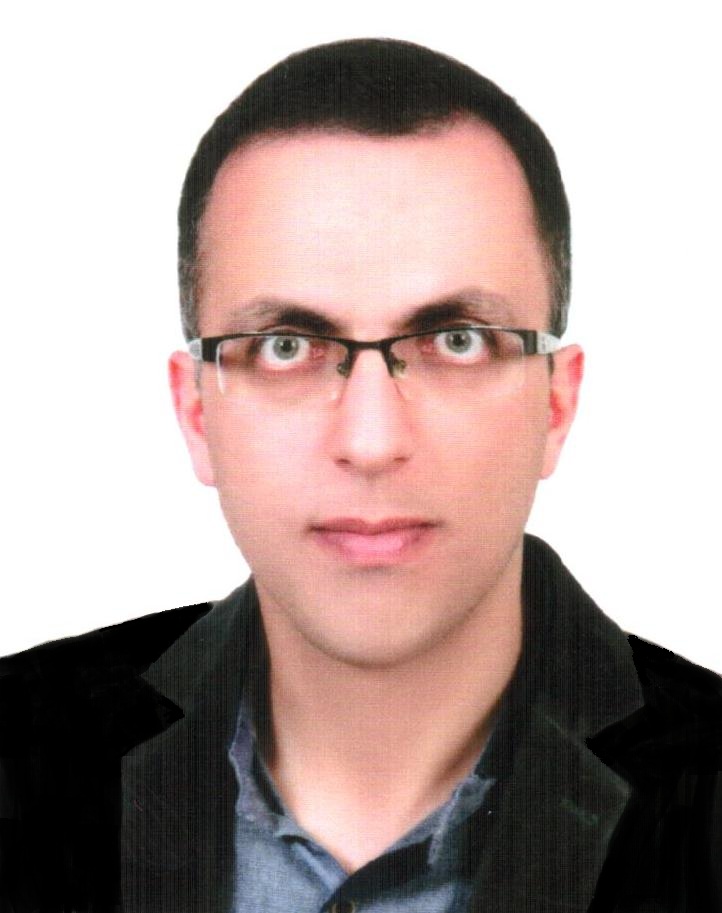}}]{Mehdi Monemi} (Member, IEEE)
		received the B.Sc., M.Sc., and Ph.D. degrees all in electrical and computer engineering from Shiraz University, Shiraz, Iran, and Tarbiat Modares University, Tehran, Iran, and Shiraz University, Shiraz, Iran in 2001, 2003 and 2014 respectively. After receiving his Ph.D., he worked as a project manager in several companies and was an assistant professor in the Department of Electrical Engineering, Salman Farsi University of Kazerun, Kazerun, Iran, from 2017 to May 2023. He was a visiting researcher in the Department of Electrical and Computer Engineering, York University, Toronto, Canada from June 2019 to September 2019. He is currently a Postdoc researcher with the Centre
for Wireless Communications (CWC), University of Oulu, Finland. His current research interests include resource allocation in 5G/6G networks, as well as the employment of machine learning algorithms in wireless networks.
	\end{biography}
 \vspace{-10pt}
\begin{biography}[{\includegraphics[width=1in,height
=1.25in,clip,keepaspectratio]{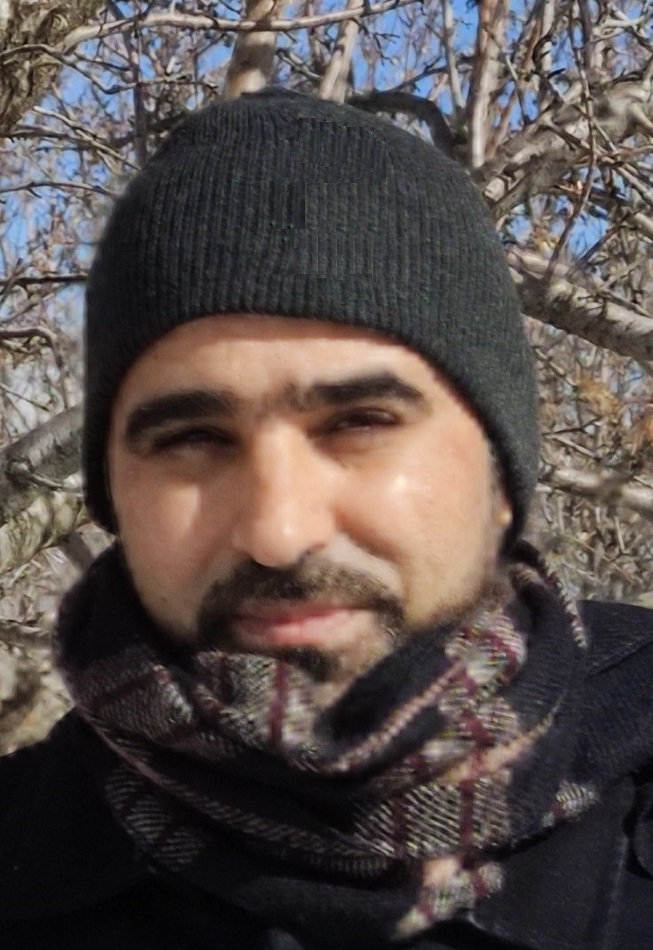}}]{Mohammad Amir Fallah}
		 received the BSc, MSc, and Ph.D. degrees from Shiraz University, Shiraz, Iran, and Tarbiat Modares University, Tehran, Iran, and Shiraz University, Shiraz, Iran, in 2001, 2003 and 2013 respectively, all in electrical and computer engineering. 
   He is an assistant professor with the Department of Engineering, Payame Noor University (PNU), Tehran, Iran
, from 2015 till now. His current research interests include antenna and propagation, mobile computing, and the application of machine learning and artificial intelligence in wireless networks.
	\end{biography}

\begin{biography}[{\includegraphics[width=1in,height=1.25in,clip,keepaspectratio]{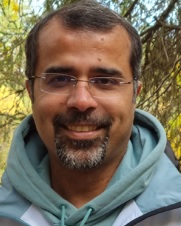}}]{Mehdi Rasti}
		(Senior Member, IEEE) received the
B.Sc. degree in electrical engineering from Shiraz University, Shiraz, Iran, in 2001, and the M.Sc. and Ph.D. degrees from Tarbiat Modares University, Tehran, Iran, in 2003 and 2009, respectively. He is currently an Associate Professor with the Centre
for Wireless Communications, University of Oulu, Finland. From 2012 to 2022, he was with the Department of Computer Engineering, Amirkabir University of Technology, Tehran. From February 2021 to January 2022, he was a Visiting Researcher with the Lappeenranta-Lahti University of Technology, Lappeenranta, Finland.
From November 2007 to November 2008, he was a Visiting Researcher with the Wireless@KTH, Royal Institute of Technology, Stockholm, Sweden. From September 2010 to July 2012, he was with the Shiraz University of
Technology, Shiraz. From June 2013 to August 2013, and from July 2014 to
In August 2014, he was a Visiting Researcher with the Department of Electrical
and Computer Engineering, University of Manitoba, Winnipeg, MB, Canada.
His current research interests include radio resource allocation in IoT, Beyond
5G and 6G wireless networks.
	\end{biography}

 \begin{biography}[{\includegraphics[width=1in,height=1.25in,clip,keepaspectratio]{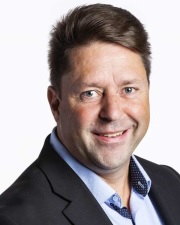}}]{Matti Latva-Aho} (Fellow, IEEE) received the M.Sc., Lic.Tech., and Dr.Tech. (Hons.) degrees in electrical engineering from the University of Oulu, Finland, in 1992, 1996, and 1998, respectively. From 1992 to 1993, he was a Research Engineer at Nokia Mobile Phones, Oulu, Finland, after that he joined the Centre for Wireless Communications (CWC), University of Oulu. He was the Director of CWC, from 1998 to 2006, and the Head of
the Department for Communication Engineering, until August 2014. Currently, he is an Academy of Finland Professor and the Director of the National 6G Flagship Program. He is a Global Fellow with Tokyo University. He has published over 500 journals or conference papers in the field of wireless communications. His group currently focuses on 6G systems research. His research interests include mobile broadband communication systems. In 2015, he received the Nokia Foundation Award for his achievements in mobile communications research.
	\end{biography}

\end{document}